\def\basegraph{\tikz[baseline=.1ex]{
\fill (0,2ex) circle (1pt) coordinate (A);
\fill (3ex,4ex) circle (1pt) coordinate (B);
\fill (3ex,0ex) circle (1pt) coordinate (C);
\fill (6ex,2ex) circle (1pt) coordinate (D);
\fill (9ex,0) circle (1pt) coordinate (E);
\fill (9ex,4ex) circle (1pt) coordinate (F);
\fill (12ex,2ex) circle (1pt) coordinate (G);
\draw (A)--(B);
\draw (A)--(C);
\draw (D)--(B);
\draw (D)--(C);
\draw (D)--(E);
\draw (D)--(F);
\draw (G)--(E);
\draw (G)--(F);
}}
\theoremstyle{definition}
\newtheorem{definition}{Definition}
\newtheorem*{definition*}{Definition}
\newtheorem{theorem}{Theorem}
\newtheorem{lemma}{Lemma}
\newtheorem{proposition}{Proposition}
\newtheorem{corollary}{Corollary}
\newtheorem*{lemma*}{Lemma}
\newtheorem*{theorem*}{Theorem}
\newtheorem*{proposition*}{Proposition}
\theoremstyle{remark}
\newtheorem*{remark}{Remark}
\begin{document}


\title{
{
Provable quantum speedups for computing persistence in topological data analysis
}
}




\author{Casper Gyurik}
\affiliation{applied Quantum algorithms (aQa), Leiden University, 2300 RA Leiden, The Netherlands}
\author{Alexander Schmidhuber}
\affiliation{Center for Theoretical Physics, Massachusetts Institute of Technology, Cambridge, USA}
\author{Robbie King}
\affiliation{Department of Computing and Mathematical Sciences, Caltech, Pasadena, USA}
\author{Vedran Dunjko}
\affiliation{applied Quantum algorithms (aQa), Leiden University, 2300 RA Leiden, The Netherlands}
\author{Ryu Hayakawa}
\affiliation{Yukawa Institute for Theoretical Physics \& The Hakubi Center, Kyoto University, Japan}
\thanks{Contact author: ryu.hayakawa@yukawa.kyoto-u.ac.jp}

\date{\today}

\begin{abstract}
Topological data analysis (TDA) aims to extract noise-robust features from a data set by examining the number and persistence of holes in its topology. We provide an efficient quantum algorithm for 
a computational problem closely related to a core task in TDA -- determining whether a given hole persists across different length scales. Further, we prove the problem itself is $\mathsf{BQP}_1$-hard, implying that a classical solution is extremely unlikely; this stands in contrast to all previous quantum approaches to TDA, where the problems were also intractable for quantum computers, or where a rigorous proof of classical hardness still remains open. This result implies an {exponential} quantum speedup for this problem under standard complexity-theoretic assumptions. Our approach relies on encoding the persistence of a hole in a variant of the guided sparse Hamiltonian problem, where the guiding state is constructed from a harmonic representative of the hole.

\end{abstract}


\maketitle

A significant challenge in quantum computing is to find practical applications where we have strong evidence for {exponential} speedup over classical computing. 
There are only a few such promising applications of quantum computing, such as the factoring \cite{shor1999polynomial}, quantum simulation~\cite{lloyd1996universal}, and a variant of sparse-access matrix inversion~\cite{harrow2009quantum}.  
{Showing evidence for advantage can be particularly challenging.
Intuitive approaches where we exhibit a quantum algorithm with an {exponentially} faster run time than best known classical algorithms can be risky, as classical algorithms can often be improved, sometimes directly inspired by the quantum algorithm. These are so-called dequantized algorithms, which have turned many conjectured exponential speedups into polynomial speedups~\cite{tang2019quantum,chia2022sampling}
The more robust way is to show that the problem solved itself is truly hard { for classical computers in the worst-case scenario} by connecting it to a more tested plausible complexity theoretic conjecture; however, this approach is also much more challenging.}

Recently, topological data analysis (TDA) has emerged as a new application of quantum computing with a potential quantum advantage. 
Topological data analysis \cite{edelsbrunner2002topological,zomorodian:alg} has gained attention as a powerful method for addressing this challenge by utilizing tools from algebraic topology. 
{TDA is of great interest in data science, as the method is robust to noise in the data, and due to its} 
ability to capture global, higher-dimensional topological features, which traditional geometric and graph-based methods often miss~\cite{edelsbrunner2022computational}.

In topological data analysis, a set of data points is first transformed into a series of combinatorial structures called a \textit{filtration of simplicial complexes}. A \textit{simplicial complex} consists of simplices (i.e., points, lines, triangles, tetrahedra, and their higher-dimensional analogs) that are connected or ``glued'' together. 
A \textit{filtration} is a sequence of simplicial complexes, $K_1, \dots, K_n$, where each $K_{i+1}$ is formed by adding simplices to the previous complex $K_i$.
Robust information about the data is then extracted by analyzing how multi-dimensional topological features evolve throughout the filtration, {which is at the heart of modern TDA}. 
This is done using \textit{persistent homology}, a mathematical framework that tracks the emergence and disappearance of ``$k$-dimensional holes'' along the filtration. 
For example, connected components correspond to $0$-dimensional holes, loops correspond to $1$-dimensional holes, and voids correspond to $2$-dimensional holes.
The number of such holes (and the change thereof) is captured by persistent Betti numbers.
Specifically, the $k$th $(i,j)$-persistent Betti number counts the number of $k$-dimensional holes that are present in the $i$th simplicial complex and persist until the $j$th (which are typically summarized visually in a so-called barcode~\cite{ghrist:barcodes}). 
It is important to note that the more salient information is encoded in persistent holes, i.e., those that appear in several simplicial complexes along the filtration, as they are less likely to be spurious artifacts. 
As a result, calculating the \textit{persistence} of a hole -- i.e., for how many steps in the filtration it remains present -- plays a crucial role in TDA.

Classical algorithms for topological data analysis face a major challenge: computing topological features typically requires {exponential} time in the desired dimension. 
Specifically, most computational problems related to calculating $k$-dimensional features in topological data analysis have a best-known classical runtime that scales {exponentially} with $k$~\cite{zomorodian:alg, friedman:alg}.
These higher-dimensional features often capture essential higher-order interactions in data, which, as mentioned earlier, are difficult for classical algorithms to compute efficiently.
{Simultaneously, a significant challenge in quantum computing is finding practical problems where quantum algorithms can provide a provable {exponential} speedup.} The milestone result by Lloyd, Garnerone, and Zanardi (LGZ)~\cite{lloyd:tda}, which aimed to estimate normalized Betti numbers (i.e., the number of holes in a given simplicial complex divided by its size) {exponentially} faster than conventional classical algorithms, was the first result to suggest that quantum methods for topological data analysis applications could be an answer for both challenges.  
Motivated by this, there has been growing interest in topological data analysis as a promising field where quantum computers could efficiently solve practical computational problems, ideally achieving {exponential} speedups over classical methods. 
{Since the LGZ proposal, several advances have been made in quantum algorithms for the specific task of normalized Betti number estimation~\cite{berry2024analyzing, mcardle:tda, akhalwaya2023topological,nghiem2025quantum}, along with extensions to persistent Betti numbers~\cite{hayakawa:tda,ameneyro2024quantum} and other related problems \cite{schmidhuber2025quantum,zi2025quantum,leditto2023topological}. }
However, it remains unclear whether normalized (persistent) Betti number estimation will be the problem where quantum computers will have the desired useful advantages due to two key limitations:

\begin{enumerate}
    \item The practical utility of normalized Betti numbers is unclear, especially since they tend to be exponentially small~\cite{gunn2019review,mcardle:tda,schmidhuber2023complexity,berry2024analyzing}. 
    Moreover, while non-normalized persistent Betti numbers are more practically relevant, their computation is beyond the reach of quantum computers in general, as they are 
$\mathsf{QMA}_1$-hard~\cite{crichigno2022clique, king:qma} to approximate within a multiplicative error and $\#\mathsf{P}$-hard to compute exactly~\cite{schmidhuber2023complexity}.
    \item Even in the instances where quantum computers can efficiently estimate normalized (persistent) Betti numbers, there is no complexity-theoretic proof of {exponential} quantum advantage.  
    The closest known result is the $\mathsf{DQC}_1$-hardness for general simplicial complexes (i.e., not 
    naturally
    those arising in topological data analysis)~\cite{gyurik2022towards, cade2024complexity}, and it is unclear whether this extends to simplicial complexes arising in topological data analysis.
    Moreover, novel classical algorithms for normalized Betti number estimation suggest that the regime where quantum algorithms offer an {exponential} speedup is even narrower than previously believed (assuming they exist at all)~\cite{apers2023simple, berry2024analyzing}.
    We note, however, that polynomial speedups remain possible for Betti number estimation (i.e., unnormalized)~\cite{mcardle:tda}.
\end{enumerate}

\paragraph{Contributions}
In this work, we explore a new problem in topological data analysis that does not suffer from the two limitations identified above. 
{
We define and study a problem we call \textsf{Harmonic Persistence}, which is closely related to deciding whether a particular $k$-dimensional hole present in an initial simplicial complex $K_1$ persists in a larger complex $K_2$ (both given as clique complexes).
Specifically, the problem takes as input a so-called harmonic representation of a hole in $K_1$ and determines whether there are holes in $K_2$ with significantly overlapping harmonic representatives.
Our main technical result shows that for weighted simplicial complexes, the problem of \textsf{Harmonic Persistence} is $\mathsf{BQP}_1$-hard\footnote{Here, $\mathsf{BQP}_1$ denotes the perfect completeness version of $\mathsf{BQP}$ (i.e., in the YES-case the circuit accepts with certainty). 
{As the different choice of universal gate set is not known to preserve the perfect completeness \cite{rudolph2024towards}, we fix our choice of universal gate set to $\{\mathrm{CNOT}, U_{\mathrm{Pyth}}\}$ (see Eq.~\eqref{eq:pyth}) following \cite{crichigno2022clique,king:qma}.}
} and contained in $\mathsf{BQP}$. 
Those results, by assuming a standard conjecture that $\mathsf{BPP}\neq\mathsf{BQP}_1$, imply exponential quantum advantage in TDA. 
{In Table~\ref{tab:comparison}, we give a summary of previous applications of quantum TDA and evidence for speedups.
Our results can be seen as a first identification of a TDA problem where exponential quantum advantage in the worst-case is guaranteed with a plausible complexity theoretic conjecture. }

\renewcommand{\arraystretch}{1.5}
\begin{table*}
    \centering
    \begin{tabular}{|c|c|c|c|}\hline
       Task  &  Proposed Speedup & Evidence  & Reference \\ \hline \hline
       Normalized Betti number:$\beta_p/|K_p|$  & \ \ exponential \ \  & Comparison with best classical algorithm \& & \cite{lloyd:tda} \\ \cline{1-2} \cline{4-4}
       Normalized persistent Betti number: $\beta^{i,j}_p/|K^i_p|$  & exponential  & $\mathsf{DQC}_1$-hardness of generalized problem~\cite{gyurik2022towards}  & \cite{hayakawa:tda,mcardle:tda} \\ \hline
       Persistent Betti number: $\beta^{i,j}_p$ & quintic & Comparison with best classical algorithm &\cite{mcardle:tda}\\ \hline
       Harmonic Persistence  & exponential & 
       Widely-held belief that $\mathsf{BQP}_1 \nsubseteq \mathsf{BPP}$
       & This work \\ \hline
    \end{tabular}
    \caption{{Applications of quantum TDA, types of quantum speedup, and evidence for the speedup.}}
    \label{tab:comparison}
\end{table*}

Our results significantly advance our understanding of problems in topological data analysis where quantum computers can deliver practical {exponential} speedups, paving the way for genuinely useful applications of future quantum computers. 
Our proofs moreover extend the technical machinery established in~\cite{king:qma}, furthering our understanding of how to encode quantum computations in (persistent) homology. 
Finally, in our proofs we additionally define a new variant of the guided sparse Hamiltonian problem~\cite{cade:glh} that could be of independent interest (see Section~\ref{sec:final}).

}

\paragraph{Organization}
In Section~\ref{sec:prelim}, we briefly review the framework of persistent homology and introduce the technical gadgets that form the foundation of our proof techniques. 
Afterwards, in Section~\ref{sec:harmonics}, we formally define and discuss the problem of \textsf{Harmonic Persistence}, and we prove our main complexity-theoretic result. 
Finally, in Section~\ref{sec:final}, we generalize the problems introduced in Section~\ref{sec:harmonics} to define the underlying quantum primitives \textsf{Kernel Overlap} and \textsf{Low-Energy Overlap}, and provide their complexity-theoretic characterization.
In Sections~\ref{sec:discussion} and~\ref{sec:conclusion}, we give discussions and a conclusion. 

\section{Preliminaries}
\label{sec:prelim}
In this section, we outline the necessary preliminaries. 
In Section~\ref{subsec:spaces} and Section~\ref{subsec:maps}, we briefly introduce the key definitions from topological data analysis, particularly focusing on concepts related to homology.
Next, in Section~\ref{sec:gadgets}, we cover an important gadget from~\cite{king:qma} that plays a crucial role in our hardness proofs.
Finally, in Section~\ref{sec:subsetstate}, we present a method for succinctly encoding (high-dimensional) subsets of simplicial complexes, which we will use to specify the inputs for the computational problems in Section~\ref{subsec:problems}.

\subsection{Definitions of subspaces}
\label{subsec:spaces}
An abstract simplicial complex is a family of sets of vertices that is closed under taking subsets. 
In our work, we focus on so-called \textit{clique complexes} $\mathrm{cl}(G)$ associated to (weighted) graphs $G$, which are simplicial complexes whose $k$-simplices correspond to the $(k+1)$-cliques of~$G$.
Let $K$ be a simplicial complex over $n$-vertices. 
A simplex $\sigma$ is called $p$-dimensional if it is composed of $p+1$-vertices i.e., $\sigma= [v_0,v_1,...,v_p]$. 
Let $K_p\subseteq K$ be the set of $p$-dimensional simplices in $K$. The $p$-dimensional chain space $C_p$ is defined as the vector space spanned by the simplices in $K_p$. 
In this paper, we always consider vector spaces with coefficients $\mathbb{C}$. 
We identify simplices with the same vertices as being equal if their ordering of vertices is related by an even permutation and assign a negative sign if they are related by an odd permutation, i.e., 
$$
\ket{[v_0,v_1,...,v_p]}= 
\left\{
\begin{array}{ll}
\ket{[v_{\pi(0)},v_{\pi(1)},...,v_{\pi(p)}]} \\ \hspace{0.7cm}  \text{if } \pi(\cdot) \text{ is an even permutation,} \\
-\ket{[v_{\pi(0)},v_{\pi(1)},...,v_{\pi(p)}]}  \\\hspace{0.7cm}   \text{if } \pi(\cdot) \text{ is an odd permutation.}
\end{array}
\right.
$$
We suppose that the vertices of each simplex $\sigma \in K$ have some arbitrary fixed ordering, since the specifics of this ordering do not affect the topological invariants we study.
The whole chain space $C(K)$ associated with the simplicial complex $K$ is defined as the direct sum $C(K) = \bigoplus_{p = 0}^{n-1} C_p(K)$\footnote{Throughout this paper, we will use $\sigma$ to denote the actual simplices in $K$, and $\ket{\sigma}$ for the associated chains in $C(K)$.}. 
Our proof techniques necessitate the introduction of an inner product on $C(K)$, defined for any $\sigma,\tau\in K$ by 
$$
\braket{\tau|\sigma} = 
\left\{
\begin{array}{ll}
(w(\sigma))^2 & \text{ if } \sigma=\tau,\\
0 & \text{otherwise},
\end{array}
\right.
$$
where $w(\sigma)>0$ is a weight assigned for $\sigma\in K$. Following \cite{king:qma}, we use the weight function defined through the product of weight of the vertices
$$
w(\sigma) := \prod_{v\in \sigma} w(v).
$$
With this inner product, the orthonormal basis for $C_p(K)$ is given by $\{\ket{\sigma'}:= \frac{1}{w(\sigma)}\ket{\sigma}: \sigma\in K_p\}$.
The corresponding (weighted) $p$-dimensional boundary operator $\partial_p: C_p(K)\rightarrow C_{p-1}(K)$ is given by 
$$
\partial_p \ket{\sigma'}:= \sum_{i=0}^p (-1)^i w(v_i)\ket{(\sigma\backslash \{v_i\})'}
$$
for simplices $\sigma=[v_0,v_1,...,v_p] \in K_p$. 

The action of the boundary operator defines several important subspaces within the $ p $-dimensional chain space $C_p(K)$.
Firstly, the space of $ p $-dimensional cycles $ Z_p(K) $ is defined as the kernel of the boundary operator:
\[
Z_p(K) = \ker(\partial_p^K).
\]
Secondly, the $p$-dimensional boundaries $B_p(K)$ are given by the image of the $(p+1)$-dimensional boundary operator:
\[
B_p(K) = \mathrm{Im}(\partial_{p+1}^K).
\]
With these definitions in place, the $p$-dimensional homology $H_p(K)$ is then defined as the quotient space\footnote{Note that this quotient space is well-defined since $\partial^K_p \cdot \partial^K_{p+1}$ = 0.}:
\[
H_p(K) = Z_p(K) / B_p(K).
\]

The elements of $H_p(K)$ are equivalence classes of cycles, where two cycles are considered equivalent if they differ by a boundary. 
\emph{In other words, the $p$-dimensional homology captures cycles that enclose a ``hole'', ``void'' or its $p$-dimensional counterpart.}
Therefore, the dimension of the $p$-th homology group, known as the $p$-dimensional Betti number and denoted as $\beta_p$, corresponds to the number of $p$-dimensional holes in the simplicial complex. 
\emph{Moreover, the equivalence classes that span this space are represented by the cycles that enclose these holes, providing a representation of the holes within the complex.}

Finally, as we will demonstrate in the next section, $p$-dimensional holes in a simplicial complex can also be described through the $p$-dimensional harmonic homology subspace, defined as
\[
\mathcal{H}_p(K) = Z_p(K) \cap B_p(K)^{\perp}.
\]
The relationship between the $p$-th homology group and the $p$-dimensional harmonic homology subspace will be clarified in Lemma~\ref{lem:3}. 
To facilitate the spectral analysis of the harmonic homology subspace, we introduce the $p$-dimensional combinatorial Laplacian $\Delta_p: C_p(K) \rightarrow C_p(K)$, defined by
\[
\Delta_p = \partial_{p+1} \partial_{p+1}^\dagger + \partial_p^\dagger \partial_p.
\]
This operator is closely related to the $p$-dimensional harmonic homology subspace, as shown in Lemma~\ref{lemma:harm}.

\subsection{Maps between subspaces}
\label{subsec:maps}
In this section, we summarize standard results about maps between the subspaces introduced in the previous section. In the following, $K$, $K_1$ and $K_2$ are simplicial complexes such that $K_1\subseteq K_2$. 
First, we prove a standard lemma that relates the harmonic subspace to the kernel of the combinatorial Laplacian.

\begin{lemma}
\label{lemma:harm}
We have the following equality of subspaces: $\mathcal{H}_p(K) = \ker(\Delta_p^K)$.
\end{lemma}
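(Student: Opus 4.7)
The plan is to prove the equality via two inclusions, exploiting the self-adjointness of $\Delta_p^K$ with respect to the weighted inner product introduced in Section~\ref{subsec:spaces} and the standard identity $\mathrm{Im}(A)^\perp = \ker(A^\dagger)$.

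For the inclusion $\ker(\Delta_p^K) \subseteq \mathcal{H}_p(K)$, I would start by observing that $\Delta_p^K = \partial_{p+1}\partial_{p+1}^\dagger + \partial_p^\dagger \partial_p$ is a sum of two positive semidefinite operators. Hence, if $v \in \ker(\Delta_p^K)$, taking the inner product with $v$ yields
\[
0 = \langle v, \Delta_p^K v\rangle = \|\partial_{p+1}^\dagger v\|^2 + \|\partial_p v\|^2,
\]
which forces $\partial_p v = 0$ and $\partial_{p+1}^\dagger v = 0$ simultaneously. The first equation shows $v \in Z_p(K)$, while the second, combined with $B_p(K) = \mathrm{Im}(\partial_{p+1})$ and the general fact $\mathrm{Im}(\partial_{p+1})^\perp = \ker(\partial_{p+1}^\dagger)$, yields $v \in B_p(K)^\perp$. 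Together these give $v \in \mathcal{H}_p(K)$.

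For the reverse inclusion, I would take $v \in \mathcal{H}_p(K) = Z_p(K) \cap B_p(K)^\perp$. Since $v \in Z_p(K) = \ker(\partial_p)$, we immediately get $\partial_p^\dagger \partial_p v = 0$. Since $v \in B_p(K)^\perp = \mathrm{Im}(\partial_{p+1})^\perp = \ker(\partial_{p+1}^\dagger)$, we also get $\partial_{p+1}\partial_{p+1}^\dagger v = 0$. Adding these two identities yields $\Delta_p^K v = 0$, so $v \in \ker(\Delta_p^K)$, completing the proof.

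The main (and really only) technical point to be careful about is that the identity $\mathrm{Im}(A)^\perp = \ker(A^\dagger)$ is being applied with respect to the \emph{weighted} inner product from Section~\ref{subsec:spaces}, not the naive simplex-counting one. This is the whole reason $\partial_p^\dagger$ is defined as the adjoint with respect to this weighted inner product in the first place, so no difficulty actually arises; however, it is worth explicitly flagging that both inclusions above use only formal adjointness relations and positive semidefiniteness, and thus do not depend on the particular choice of weights beyond the fact that the inner product is positive definite. There is no real obstacle here — this is the standard Hodge-theoretic characterization of harmonic chains, transported to the combinatorial setting.
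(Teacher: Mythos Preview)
Your proof is correct and follows essentially the same route as the paper's: both use the positive semidefiniteness of the two summands $\partial_{p+1}\partial_{p+1}^\dagger$ and $\partial_p^\dagger\partial_p$ to split $\Delta_p^K v = 0$ into $\partial_p v = 0$ and $\partial_{p+1}^\dagger v = 0$, then invoke $B_p(K)^\perp = \ker(\partial_{p+1}^\dagger)$. Your reverse inclusion is spelled out a bit more explicitly than the paper's ``clearly follows from the definition,'' and your remark about the weighted inner product is a helpful clarification, but there is no substantive difference in approach.
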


\begin{proof}

    Because $\Delta^K_p$ is a sum of positive-semidefinite Hamiltonians $\Delta^{K,\uparrow}_p = \partial_{p+1} \partial_{p+1}^\dagger$ and $\Delta^{K,\downarrow}_p= \partial_p^\dagger \partial_p$, we know that $\Delta^K_p \ket{\sigma} = 0$ implies
    $$\Delta^{K,\uparrow}_p \ket{\sigma} =0 \text{ and } \Delta^{K,\downarrow}_p \ket{\sigma} =0.$$
    Moreover, $\Delta^{K,\downarrow}_p \ket{\sigma} =0$ implies that 
    $$
    \|\partial_p^K \ket{\sigma} \|^2 = \bra{\sigma} \Delta^{K,\downarrow}_p \ket{\sigma}=0
    $$
    and  $\Delta^{K,\uparrow}_p \ket{\sigma} =0$ implies that 
    $$
    \|(\partial_{p+1}^K)^\dagger \ket{\sigma} \|^2 = \bra{\sigma} \Delta^{K,\uparrow}_p \ket{\sigma}=0.
    $$
    
    Since  $B_p(K)^{\perp}= \ker \left((\partial_{p+1}^K)^\dagger\right)$, we thus find that $\ket{\sigma}\in \mathcal{H}_p(K)$ if $\Delta^K_p \ket{\sigma} = 0$.
    On the other hand, it clearly follows from the definition that $\ket{\sigma}\in \mathcal{H}_p(K)$ implies that $\Delta^K_p \ket{\sigma} =0$.
    
    \end{proof}
    
Next, we define a map between the homology groups of two simplicial complexes, $ K_1 \subseteq K_2 $. This map is a cornerstone in the theory of persistent homology, as it tracks how topological features evolve from one simplicial complex to another in a filtration.

\begin{lemma}
\label{lemma:ihom}
The inclusion $i_p: C_p(K_1) \rightarrow C_p(K_2)$ induces a well-defined map on homology 
\[
    i^*_p: H_p(K_1) \rightarrow H_p(K_2)
\]
given by 
\[
    z + B_p(K_1) \mapsto z + B_p(K_2).
\]
\end{lemma}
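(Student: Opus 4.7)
The plan is to verify the standard functoriality of homology under inclusion, taking a small amount of care because the boundary operator in this paper is weighted. The two things to check are (i) that the inclusion $i_p$ sends cycles in $K_1$ to cycles in $K_2$, and boundaries in $K_1$ to boundaries in $K_2$, which ensures the quotient map even makes sense; and (ii) that the formula $z + B_p(K_1) \mapsto z + B_p(K_2)$ is independent of the choice of representative $z$.

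The first key step is to show that the boundary operator commutes with inclusion: $\partial_p^{K_2} \circ i_p = i_{p-1} \circ \partial_p^{K_1}$. Because the vertex weights are intrinsic to the vertices (and hence agree on $K_1$ and $K_2$, since $K_1 \subseteq K_2$), the normalized basis element $\ket{\sigma'} = \frac{1}{w(\sigma)}\ket{\sigma}$ associated to a simplex $\sigma \in K_1$ is the same chain as the corresponding basis element in $C_p(K_2)$, and the defining formula
\[
\partial_p \ket{\sigma'} = \sum_{i=0}^{p} (-1)^i w(v_i)\ket{(\sigma\setminus\{v_i\})'}
\]
depends only on $\sigma$ and the weights of its vertices. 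Thus $\partial_p^{K_1}$ and $\partial_p^{K_2}$ agree on chains supported on $K_1$, which gives the commuting relation on basis elements and hence on all of $C_p(K_1)$ by linearity.

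From the commuting relation, the two containments follow immediately. For cycles, if $z \in Z_p(K_1) = \ker(\partial_p^{K_1})$, then $\partial_p^{K_2} i_p(z) = i_{p-1}\partial_p^{K_1}(z) = 0$, so $i_p(z) \in Z_p(K_2)$. For boundaries, if $b = \partial_{p+1}^{K_1}(c)$ for some $c \in C_{p+1}(K_1)$, then $i_p(b) = i_p\partial_{p+1}^{K_1}(c) = \partial_{p+1}^{K_2} i_{p+1}(c) \in B_p(K_2)$.

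Well-definedness of the quotient map then follows by a one-line argument: if $z + B_p(K_1) = z' + B_p(K_1)$, then $z-z' \in B_p(K_1)$, so $i_p(z) - i_p(z') \in B_p(K_2)$, and therefore $z + B_p(K_2) = z' + B_p(K_2)$. Linearity of $i^*_p$ is inherited from linearity of $i_p$. There is no real obstacle here; the only place one could slip up is silently assuming that the boundary formula transfers unchanged from $K_1$ to $K_2$, which is why I would make the weight-compatibility explicit as the first step of the proof.
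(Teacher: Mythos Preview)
Your proof is correct and follows the same standard functoriality argument as the paper; the paper's own proof is extremely terse (it essentially just asserts $z \in Z_p(K_2)$ and stops), whereas you spell out the chain-map identity $\partial^{K_2}_p \circ i_p = i_{p-1} \circ \partial^{K_1}_p$, the boundary containment, and independence of representative. Your explicit check that the weighted boundary formula depends only on the simplex and its vertex weights (and hence agrees on $K_1$ and $K_2$) is a worthwhile addition that the paper leaves implicit.
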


\begin{proof}

    Let $z + B_p(K_1)\in H_p(K_1)$ for $z \in Z_p(K_1)$. Then, $i^*_p(z + B_p(K_1))= z + B_p(K_2) \in H_p(K_2)$ because $z \in Z_p(K_2)$. This makes the map well-defined as claimed. 

\end{proof}

\noindent Using the map defined in Lemma~\ref{lemma:ihom}, we can now formally define what it means for a hole to ``persist''.

\begin{definition}
    An element of homology (or ``hole'' for short) $\zeta \in H_p(K_1)$  is said to \textit{persist} to the larger simplicial complex $K_2$ if $i^*_p(\zeta) \not = 0$ (i.e., its image under the inclusion map is nonzero).
\end{definition}

As our primary analysis of the holes in the simplicial complexes will focus on the harmonic homology space, it is crucial to establish a connection between this space and the standard homology. 
This link is established in the following lemma from~\cite{basu:harmonic}.

\begin{lemma}[\cite{basu:harmonic}]
\label{lem:3}

The map $\mathfrak{f}^K_p: H_p(K) \rightarrow \mathcal{H}_p(K)$ defined by 
    \[
        z + B_p(K) \mapsto \mathrm{proj}_{B_p(K)^\perp}(z), \quad z \in Z_p(K),
    \]
    is an isomorphism.
\end{lemma}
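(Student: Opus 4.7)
The plan is to verify the four standard claims in order: well-definedness, mapping into the stated codomain, injectivity, and surjectivity. Linearity of $\mathfrak{f}^K_p$ will follow for free from linearity of the orthogonal projection $\mathrm{proj}_{B_p(K)^\perp}$. The single global fact I would rely on throughout is that $B_p(K) \subseteq Z_p(K)$, which is a direct consequence of $\partial_p \partial_{p+1} = 0$; this is the reason the quotient $H_p(K)$ makes sense in the first place, and it is also the ingredient that keeps projections of cycles inside $Z_p(K)$.

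First, for well-definedness, I would take two representatives $z_1, z_2 \in Z_p(K)$ with $z_1 - z_2 \in B_p(K)$ and observe that $\mathrm{proj}_{B_p(K)^\perp}$ annihilates $B_p(K)$, so $\mathrm{proj}_{B_p(K)^\perp}(z_1) = \mathrm{proj}_{B_p(K)^\perp}(z_2)$. Next, to see that the image actually lies in $\mathcal{H}_p(K) = Z_p(K) \cap B_p(K)^\perp$, I would decompose $z = \mathrm{proj}_{B_p(K)}(z) + \mathrm{proj}_{B_p(K)^\perp}(z)$. The first term lies in $B_p(K) \subseteq Z_p(K)$ and $z \in Z_p(K)$, so the second term is in $Z_p(K)$ as well, and by construction it lies in $B_p(K)^\perp$.

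For injectivity, if $\mathfrak{f}^K_p(z + B_p(K)) = 0$ then $\mathrm{proj}_{B_p(K)^\perp}(z) = 0$, so $z \in B_p(K)$ and the class is trivial in $H_p(K)$. For surjectivity, given $h \in \mathcal{H}_p(K)$, I would note that $h \in Z_p(K)$ already represents a homology class and that $\mathfrak{f}^K_p(h + B_p(K)) = \mathrm{proj}_{B_p(K)^\perp}(h) = h$, since $h$ is already in $B_p(K)^\perp$.

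There is no real obstacle here; the statement is the finite-dimensional Hodge decomposition specialized to the chain complex, and the inner product on $C_p(K)$ from Section~\ref{subsec:spaces} supplies exactly the orthogonal complement needed to make the projection well-defined. The only detail worth writing carefully is the inclusion $B_p(K) \subseteq Z_p(K)$ when arguing that the projection of a cycle remains a cycle; everything else is an immediate consequence of the definitions of $H_p(K)$ and $\mathcal{H}_p(K)$.
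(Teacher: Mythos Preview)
Your argument is correct and is exactly the standard verification that the harmonic projection gives an isomorphism between homology and the harmonic subspace in finite dimensions. The paper does not actually prove Lemma~\ref{lem:3}; it simply cites the result from \cite{basu:harmonic}, so there is no in-paper proof to compare against. Your four-step check (well-definedness via $\mathrm{proj}_{B_p(K)^\perp}$ killing $B_p(K)$, codomain via $B_p(K)\subseteq Z_p(K)$, injectivity, surjectivity by taking $h$ itself as a representative) is complete and appropriate for the finite-dimensional inner-product setting established in Section~\ref{subsec:spaces}.
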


Finally, to investigate the persistence of holes through the harmonic homology subspace, we observe that the inclusion discussed in Lemma~\ref{lemma:ihom} can be suitably extended to the harmonic subspace. 
This extension is exhibited in the following lemma from~\cite{basu:harmonic}.

\begin{lemma}[\cite{basu:harmonic}]
\label{lem:4}

The restriction of $\mathrm{proj}_{B_p(K_2)^\perp}$ to $\mathcal{H}_p(K_1)$ gives a linear map 
    \[
    \mathfrak{i}_p = \mathrm{proj}_{B_p(K_2)^\perp}|_{\mathcal{H}_p(K_1)}: \mathcal{H}_p(K_1) \rightarrow \mathcal{H}_p(K_2)
    \]
    that makes the following diagram commute

    \begin{center}

    \begin{tikzcd}
    H_{p}(K_1) \arrow{r}{i^*_p} \arrow[leftrightarrow]{d}{\simeq}[swap]{\mathfrak{f}^{K_1}_p} & H_{p}(K_2) \arrow[leftrightarrow]{d}{\simeq}[swap]{\mathfrak{f}^{K_2}_p}\\
    \mathcal{H}_p(K_1) \arrow{r}{\mathfrak{i}_p} & \mathcal{H}_p(K_2)
    \end{tikzcd}
        
    \end{center}    
\end{lemma}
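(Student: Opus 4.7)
The plan is to verify two things: (i) the map $\mathfrak{i}_p$ actually lands in $\mathcal{H}_p(K_2)$ rather than merely in $B_p(K_2)^{\perp}$, and (ii) the square commutes. The overall strategy exploits the compatibility of the boundary operators under the inclusion $K_1 \subseteq K_2$, together with the containment $B_p(K_1) \subseteq B_p(K_2)$. The latter holds because any $(p+1)$-simplex of $K_1$ is also a simplex of $K_2$ with literally the same weighted boundary: the weight $w(\sigma) = \prod_{v \in \sigma} w(v)$ depends only on the vertex set of $\sigma$, not on the ambient complex.

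For well-definedness, I would take $h \in \mathcal{H}_p(K_1) = Z_p(K_1) \cap B_p(K_1)^{\perp}$ and decompose it inside $C_p(K_2)$ as $h = b + h'$ with $b \in B_p(K_2)$ and $h' = \mathrm{proj}_{B_p(K_2)^{\perp}}(h)$. It then suffices to show $h' \in Z_p(K_2)$. Since $\partial_p^{K_2}$ restricted to $C_p(K_1)$ equals $\partial_p^{K_1}$, we have $\partial_p^{K_2}(h) = \partial_p^{K_1}(h) = 0$, and $\partial_p^{K_2}(b) = 0$ by $\partial_p \partial_{p+1} = 0$. Hence $\partial_p^{K_2}(h') = 0$, placing $h'$ in $\mathcal{H}_p(K_2)$.

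For commutativity, let $\zeta = z + B_p(K_1)$ with $z \in Z_p(K_1)$. Chasing the diagram in both directions yields
\[
\mathfrak{f}^{K_2}_p(i^*_p(\zeta)) = \mathrm{proj}_{B_p(K_2)^{\perp}}(z), \qquad \mathfrak{i}_p(\mathfrak{f}^{K_1}_p(\zeta)) = \mathrm{proj}_{B_p(K_2)^{\perp}}\bigl(\mathrm{proj}_{B_p(K_1)^{\perp}}(z)\bigr).
\]
Decomposing $z = b_1 + h_1$ with $b_1 \in B_p(K_1)$ and $h_1 \in B_p(K_1)^{\perp}$, the containment $B_p(K_1) \subseteq B_p(K_2)$ forces $\mathrm{proj}_{B_p(K_2)^{\perp}}(b_1) = 0$, so both expressions collapse to $\mathrm{proj}_{B_p(K_2)^{\perp}}(h_1)$ and the diagram commutes.

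The main obstacle is not the diagram chase itself, which is linear-algebraic once the setup is right, but rather being careful about the weighted structure: one has to verify that the inclusion $C_p(K_1) \hookrightarrow C_p(K_2)$ is a genuine isometric embedding that intertwines the two boundary operators, and that orthogonal complements behave coherently across the two inner-product spaces. Once this compatibility is recorded, together with $B_p(K_1) \subseteq B_p(K_2)$, both the well-definedness and the commutativity follow immediately.
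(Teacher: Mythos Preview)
Your argument is correct. The paper does not actually supply its own proof of this lemma; it is stated with a citation to \cite{basu:harmonic} and used as a black box. Your two-step verification (well-definedness of the target, then the diagram chase using $B_p(K_1)\subseteq B_p(K_2)$) is exactly the standard argument one finds in that reference, and your remark that the vertex-based weight function makes the inclusion $C_p(K_1)\hookrightarrow C_p(K_2)$ an isometric embedding intertwining the boundary maps is the only point that requires care in the weighted setting; you have handled it correctly.
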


Importantly, Lemma~\ref{lem:3} offers a method to represent holes as elements within the harmonic homology subspace. 
We will refer to these elements, which correspond to the equivalence classes spanning the homology space, as the \textit{harmonic representative} of a hole. 

\begin{definition}[Harmonic representative of a hole]

For $\zeta \in H_p(K)$, we refer to $\mathfrak{f}^{K}_p(\zeta)$ as its \textit{harmonic representation}.

\end{definition}

Furthermore, Lemma~\ref{lem:4} provides a framework for tracking the persistence of holes by examining how the harmonic representation of a hole in $K_1$ is mapped onto the harmonic homology subspace of $K_2$.

\begin{definition}[
\cite{basu:harmonic}]
\label{def:harmpers}

A harmonic representative $\ket{\sigma} \in \mathcal{H}_p(K_1)$ is said to persist to the larger simplicial complex $K_2$ if $\mathfrak{i}_p(\ket{\sigma}) \neq 0$.

\end{definition}

\subsection{$\mathsf{QMA}_1$-hardness of homology}

In this section, we review some key aspects of the construction in~\cite{king:qma}, which establishes the $\mathsf{QMA}_1$-hardness of determining whether the homology of a simplicial complex is trivial (i.e., equal to ${0}$) or not.

The $\mathsf{QMA}_1$-hardness result is achieved via a reduction to the quantum satisfiability problem, where the objective is to determine whether a set of rank-1 projectors $\{\ket{\phi_i}\bra{\phi_i}\}^t_{i=1}$ has a non-trivial common element in their kernels. 
To facilitate this reduction, the authors of~\cite{king:qma} first construct a simplicial complex whose homology corresponds to the Hilbert space of an $n$-qubit system, $\mathbb{C}^{2^n}$.
Afterwards, the reduction introduces simplices as gadgets, where each gadget aims to ``remove" the states $\ket{\phi_i}$ from the homology. 
These gadgets effectively ensure that the remaining homology captures the solution to the original quantum satisfiability instance. 
The construction of this simplicial complex and its associated gadgets is outlined in Section~\ref{sec:gadgets}.

\subsubsection{Qubit and projector gadgets construction}
\label{sec:gadgets}

In this section, we introduce the construction of the clique complexes of \cite{king:qma} that can be used to encode instances of the quantum satisfiability problems composed of rank-1 projectors on $n$-qubits~\cite{bravyi:sat}. 
For the gadget constructions that have been discussed extensively in \cite{crichigno2022clique,king:qma}, we aim to keep the technical description brief and concise in order to allow the reader to gain a clear overview of the techniques.

Let $\{\ket{\phi_i}\bra{\phi_i}\}^t_{i=1}$ be a set of rank-1 projectors, where each $\ket{\phi_i}$ is a so-called {\it integer state}. Here, integer states are normalized states that can be written as 
$$
\ket{\phi} = \frac{1}{Z} \sum_{x\in \{0,1\}^n} a_x \ket{x}
$$
for some normalization coefficient $Z\in \mathbb{R}$ and $a_x \in \mathbb{Z}$ for all $x\in \{0,1\}^n$. 

\paragraph{Qubit graph}
We first introduce the weighted clique complex $K_1=\mathrm{cl}(\mathcal{G}_n)$ whose $(2n-1)$-th homology group $H_{2n-1}(K_1)$ is isomorphic to $\mathbb{C}^{2^n}$. 
That is, we can use the $(2n-1)$-th homology of $K_1$ to encode the space onto which an instance of the quantum satisfiability problem acts.
Specifically, $\mathcal{G}_n$ is a $7n$-vertex graph which is constructed by the graph join\footnote{The join $G_1\ast G_2$ of graphs $G_1 = (V_1, E_1)$ and $G_2 = (V_2, E_2)$ is the graph $G = (V, E)$ with vertices $V = V_1 \cup V_2$ and edges $E$ given by $E_1\cup E_2$ together with all the edges joining $V_1$ and $V_2$.} of $n$ graphs:
\begin{align} 
\mathcal{G}_n = \basegraph_1 \ast \dots \ast \basegraph_n
\end{align}
where we define the elementary graphs as
\begin{align}
    \mathcal{G} = \basegraph.
\end{align}
The Kunneth formula tells us that $H_{2n-1}(K_1)\cong \mathbb{C}^{2^n}$, see \cite{king:qma}.
The weights of all the vertices of $\mathcal{G}_n$ are~$1$. 

There is a map $s: \mathbb{C}^{2^n} \rightarrow \mathcal{H}_{2n-1}$, where $\mathbb{C}^{2^n}$ is the Hilbert space of $n$-qubit system.
Any computational basis state $\ket{x}$ where $x=x_1x_2...x_n \in \{0,1\}^n$ is mapped to 
$$\ket{\sigma_x}:=
\ket{\sigma_{x_1}}\otimes \cdots \otimes \ket{ \sigma_{x_n}}
$$
by $s$, where each of the $\ket{\sigma_{x_i}}$ is a 1-dimensional cycle in the $i$th copy of $\mathcal{G}$ (i.e., the left cycle with four edges if $x_i=0$ and right cycle if $x_i=1$). 
Also, note that $\ket{\sigma_x}$ is $(2n-1)$-dimensional (i.e., superposition of simplices with $2n$-vertices) because it is an $n$-fold tensor product of states in the $1$-dimensional chain space. 

\paragraph{Gadgets for projectors}
Since we want to construct clique complexes that encode instances of the quantum satisfiability problem, we need a way to encode projectors onto the homology of the qubit graphs.
To do so, we introduce the construction of a weighted clique complex with the input $\{\ket{\phi_i}\bra{\phi_i}\}_{i=1}^{t}$ and a weight-parameter $\lambda$. 
The simplicial complex $K_2$ that we introduce is a clique complex for a graph $\widehat{\mathcal{G}}_n$ such that $\mathcal{G}_n\subseteq \widehat{\mathcal{G}}_n$ and $K_1 \subseteq K_2$. 
Here $\widehat{\mathcal{G}}_n$ is constructed such that the chain space can be decomposed as 
$$
C_p(K_2) = C_p(K_1) \oplus C_p(\mathcal{T}_1) \oplus  \cdots \oplus C_p(\mathcal{T}_t), 
$$
where each $\mathcal{T}_i$ is a set of simplices that is introduced depending on the projector $\ket{\phi_i}\bra{\phi_i}$. 
Each of the gadget simplices $\mathcal{T}_i$ is added so that there will be a $2n$-dimensional state in $C_p(K_2)$ whose boundary is precisely $s(\ket{\phi_i})$. 
In other words, each gadget $\mathcal{T}_i$ ``fills'' the $(2n-1)$-dimensional hole corresponding to $\ket{\phi_i}$. 
While the vertices of ${\mathcal{G}}_n$ have weight $1$, the weight of the vertices in $\widehat{\mathcal{G}}_n\backslash {\mathcal{G}}_n$ is set to be $\lambda \ll 1$, such that most of the weight remains on the qubit graph. 
For more details on the precise weighting, we refer to \cite{king:qma}.

\subsection{Subset states for clique complexes}
\label{sec:subsetstate}

Chains corresponding to high-dimensional holes reside in vector spaces whose dimensions grow exponentially with the number of vertices in the simplicial complex. As a result, they generally require an exponentially large description. In this section, we define a class of high-dimensional chains that admit a succinct representation (i.e., a polynomially-sized description in the number of vertices in the simplicial complex), which we refer to as \textit{subset states}.
These subset states will serve as input to the computational problems in Section~\ref{subsec:problems}.

Let $K$ be a simplicial complex, and let $S \subseteq K_{2p-1}$ be a subset. 
We define the \textit{subset state} of $S$ as follows:

\[
\frac{1}{\sqrt{|S|}} \sum_{\sigma \in S} \ket{\sigma}.
\]

Next, consider a simplicial complex $ K = \mathrm{cl}(G) $ that admits a decomposition $ G = G_1 \ast G_2 \ast \cdots \ast G_p $. We will show that, in some cases, even when the cardinality of $ S $ is exponential, it is still possible to provide a succinct description of the corresponding subset state.
To construct the succinct description of the subset state of $ S $, consider sets of edges $ E^{(1)}, E^{(2)}, \dots, E^{(p)} $, where each $ E^{(i)} $ is a set of edges from $ G_i $. The Cartesian product $ S = E^{(1)} \times \cdots \times E^{(p)} $ forms a subset of $(2p-1)$-simplices in $ K_{2p-1} $. The subset state of this Cartesian product $ S \subseteq K_{2p-1} $ is then given by:

\[
\frac{1}{\sqrt{|S|}} \sum_{\sigma \in S} \ket{\sigma},
\]
which represents a state in $ C_{2p-1}(K) $. Equivalently, this state can be written in product form as:
\[
\ket{S} = \bigotimes_{i=1}^p \left( \frac{1}{\sqrt{|E^{(i)}|}} \sum_{e \in E^{(i)}} \ket{e} \right).
\]
Although the cardinality of $S$ can be exponential in the number of vertices, specifying the edges $\{E^{(i)}\}$ in the decomposition of $S$ provides a succinct description of $S$. 
By describing $S$ in terms of its edge sets $\{E^{(i)}\}$, we obtain a representation that is polynomial in size, even though $S$ itself may be exponentially large.

Finally, this construction can be generalized to $ m $ subsets $ S_1, S_2, \dots, S_m \subseteq K_{2p-1} $, where each $ S_i $ can be decomposed as $ S_i = E^{(1)}_i \times \dots \times E_i^{(p)} $. For simplicity, suppose that $ |S| = |S_1| = |S_2| = \cdots = |S_m| $, then we define the \textit{subset state} of $ S = \bigcup_{i=1}^m S_i $ as:

\[
\Ket{\bigcup_{i=1}^m S_i} = \frac{1}{\sqrt{m}} \sum_{i=1}^m \ket{S_i} = \frac{1}{\sqrt{m |S|}} \sum_{i=1}^m \sum_{\sigma \in S_i} \ket{\sigma}.
\]
This state is called the \textit{subset state} of $ S = \bigcup_{i=1}^m S_i $. It can be described using a total of $ m \times p $ sets of edges $ \{ E^{(j)}_i\}$, which provides a succinct (polynomial-sized) description of $S$ in terms of the number of vertices.
In the following section, we will show that the problem of deciding whether harmonics that can be represented as subset states persist or not is $\mathsf{BQP}_1$-hard and contained in $\mathsf{BQP}$.

While finding a harmonic hole is known to be $\mathsf{QMA}_1$-hard in general~\cite{crichigno2022clique, king:qma} and thus not efficiently feasible on a quantum computer, we highlight that in our formulation of $\delta\text{-}\mathsf{Harmonic}$ $\mathsf{Persistence}$, the initial hole $\sigma \in \ker \Delta^{K_1}_p$ can, in fact, be efficiently prepared using classical or quantum methods. In particular, the initial simplicial complex $K_1$ constructed in our $\mathsf{BQP}_1$-hardness proof (Proposition~\ref{prop:hard}) has a combinatorial Laplacian with a tensor product structure, enabling efficient preparation of the corresponding initial harmonic hole. Alternatively, this can be understood as the combinatorial Laplacian of $K_1$ having a sufficiently small bond dimension, allowing for efficient state preparation. Finding other methods and families of initial simplicial complexes that permit efficient hole preparation remains an interesting avenue for future research.

\section{Quantum complexity of persistent harmonics}
\label{sec:harmonics}
In this section, we present our main result. 
First, in Section~\ref{subsec:problems} we provide the formal definitions of the computational problem we study and state our main result Theorem~\ref{thm:main}, which we then prove in Section~\ref{sec:proof_main}.

\subsection{Problem definitions}
\label{subsec:problems}

Consider simplicial complexes $K_1 \subseteq K_2$.
We study the following problem: ``\textit{given a succinct classical description of a harmonic $\ket{\sigma} \in \mathcal{H}_p(K_1)$ (i.e., a representative of a hole), decide whether $||\mathrm{proj}_{\mathcal{H}_p(K_2)}(\ket{\sigma})||$ is $(a)$ large (i.e., $\Omega(1/\mathrm{poly}(n)$), or $(b)$ small (i.e., $\mathcal{O}(1/\mathrm{exp}(n)$))}''.
We formally define this problem as follows:\\

\noindent
\underline{\textbf{$\delta$-\textsf{Harmonic Persistence}}}\\[0.3em]
\noindent\begin{minipage}{\linewidth}
  \textbf{Input:} \\ 1) Simplicial complexes $K_1 \subseteq K_2$, where $K_i = \mathrm{cl}(G_i)$ and $|V(G_i)| = n$ for $i=1, 2$.\\
    2) An integer $0 \leq p < n$.\\
    3) A
   $\mathrm{poly}(n)$-size description of $S \subseteq K_{1,p}$. Here, $K_{1,p}$ is a set of $p$-simplices in~$K_1$.
   \\[1em]
   
  \textbf{Promises:} \\
   1) Spectral gap\\ $\gamma(\Delta_p^{K_2}) = \min \{|\lambda| \mid \lambda \in \mathrm{Spec}(\Delta_p^{K_2}),\text{ }\lambda > 0\}\geq \frac{1}{\mathrm{poly}(n)}$.\\
   2) For $\ket{\sigma} :=  \frac{1}{\sqrt{|S|}}\sum_{\sigma_i \in S} \ket{\sigma_i}$ either \\
  $||\mathrm{proj}_{\mathcal{H}_p(K_2)}(\ket{\sigma})|| \geq \delta$, 
  or $||\mathrm{proj}_{\mathcal{H}_p(K_2)}(\ket{\sigma})|| < \frac{1}{\mathrm{exp}(n)}$.\\[1em]
  \textbf{Output:} \\
  1 if 
   $||\mathrm{proj}_{\mathcal{H}_p(K_2)}(\ket{\sigma})|| \geq \delta$
  , or\\ 0 if $||\mathrm{proj}_{\mathcal{H}_p(K_2)}(\ket{\sigma})|| < \frac{1}{\mathrm{exp}(n)}$.
\end{minipage}

\begin{remark} $\mathrm{proj}_V$ denotes the orthogonal projection onto $V$.
\end{remark}

For harmonic representatives of holes $\ket{\sigma} \in \mathcal{H}_p(K_1)$, assessing whether $\| \mathrm{proj}_{\mathcal{H}_p(K_2)}(\ket{\sigma}) \|$ is large provides insight into whether a hole persists in $K_2$. 
Nonetheless, there are contrived cases where a harmonic hole $\ket{\sigma} \in \mathcal{H}_p(K_1)$ continues to $K_2$ in a canonical sense but only with an exponentially small overlap with $\mathcal{H}_p(K_2)$ (i.e., it gets exponentially deformed in the process).
However, such holes are arguably not the meaningful topological features in the data that TDA aims to extract.
Additionally, it is possible that a hole $\ket{\sigma} \in \mathcal{H}_p(K_1)$ does not persist in the {strict} sense as in {Definition~\ref{def:harmpers}} but instead only has support on the eigenspaces of $\Delta^{K_2}_p$ associated with exponentially small eigenvalues. 
In these situations, the quantum algorithm cannot reliably distinguish these small eigenvalues from zero, a common limitation in quantum algorithms for linear algebraic tasks. 
As a result, the algorithm may fail to detect the hole’s nonpersistence.
In short, our \textsf{Harmonic Persistence} problem may yield both false positives and false negatives compared to the canonical problem {of deciding a persistence of a hole as in Definition~\ref{def:harmpers}}. 
Although specific contrived examples can demonstrate these cases, we conjecture that these limitations will not typically occur in typical instances, such as random Vietoris-Rips complexes. 
In summary, the \textsf{Harmonic Persistence} problem provides a means to determine whether a hole meaningfully persists, specifically by determining whether there are harmonics in $K_2$ with significant overlap.
{More discussion on the definition of the problem is given in Section~\ref{sec:discussion}.}

Our main result is stated in Theorem~\ref{thm:main}.
The proof of Theorem~\ref{thm:main} is provided in Section~\ref{sec:proof_main}. 
The proof consists of two parts: first, we show the problem is $\mathsf{BQP}_1$-hard for $\delta$ as large as $1-\frac{1}{\mathrm{poly}(n)}$. Next, we show that the problem is contained in $\mathsf{BQP}$ for $\delta$ as small as $\frac{1}{\mathrm{poly}(n)}$.
Our proof utilizes and extends the framework of~\cite{king:qma} to encode quantum computation into the persistent homology of clique complexes.

\begin{theorem}
\label{thm:main}
    For all $\delta \in \left[\frac{1}{\mathrm{poly}(n)},1-\frac{1}{\mathrm{poly}(n)}\right]$, the problem $\delta\text{-}\mathsf{Harmonic}$ $\mathsf{Persistence}$ is $\mathsf{BQP}_1$-hard.
    Moreover, for all $\delta \in \left[\frac{1}{\mathrm{poly}(n)}, 1\right]$ the problem $\delta\text{-}\mathsf{Harmonic}$ $\mathsf{Persistence}$ is contained in $\mathsf{BQP}$.
\end{theorem}

As a corollary, we also obtain a complexity-theoretic characterization of deciding whether a chain has a large overlap with the harmonic homology space or not.
Intuitively, this corresponds with checking whether a chain has a large overlap with one or multiple holes.
We define this computational problem as follows. \\

\noindent\begin{minipage}{\linewidth}
{\underline{\textbf{$\delta$-\textsf{Harmonics}}}} \\[0.3em]
\textbf{Input:} \\ 1) Simplicial complex $K$ such that $K = \mathrm{cl}(G)$ and $|V(G)| = n$.\\
 2) An integer $0 \leq p < n$.\\
 3) {
A $\mathrm{poly}(n)$-size description of $S \subseteq K_{p}$. Here, $K_{p}$ is a set of $p$-simplices in~$K$.}\\[1em]

\textbf{Promises:}\\
 1) Spectral gap $\gamma(\Delta_p^{K}) = \min \{|\lambda| \mid \lambda \in \mathrm{Spec}(\Delta_p^{K}),\text{ }\lambda > 0\}\geq \frac{1}{\mathrm{poly}(n)}$.\\
 2) For
$\sigma :=  \frac{1}{\sqrt{|S|}}\sum_{\sigma_i \in S} \ket{\sigma_i}$
, either of the following holds: \\  $(a)$ $||\mathrm{proj}_{\mathcal{H}_p(K)}(\sigma)|| \geq \delta$, or $(b)$ $||\mathrm{proj}_{\mathcal{H}_p(K)}(\sigma)|| < \frac{1}{\mathrm{exp}(n)}$.\\[1em]
\end{minipage}
\noindent\begin{minipage}{\linewidth}
\textbf{Output:} \\ 1 if $||\mathrm{proj}_{\mathcal{H}_p(K)}(\sigma)|| \geq \delta$, or 0 if $||\mathrm{proj}_{\mathcal{H}_p(K)}(\sigma)|| < \frac{1}{\mathrm{exp}(n)}$.
\end{minipage}

\bigskip

As a corollary of Theorem~\ref{thm:main}, we can show the following result. 

\begin{corollary}
\label{cor:related}
    For all $\delta \in \left[\frac{1}{\mathrm{poly}(n)},1-\frac{1}{\mathrm{poly}(n)}\right]$, the problem $\delta\text{-}\mathsf{Harmonics}$ is $\mathsf{BQP}_1$-hard.
    Moreover, for all $\delta \in \left[\frac{1}{\mathrm{poly}(n)},1\right]$ the problem $\delta\text{-}\mathsf{Harmonics}$ is contained in $\mathsf{BQP}$.
\end{corollary}
\begin{proof}

$\mathsf{Harmonics}$ can be used to solve instance of $\mathsf{Harmonic}$ $\mathsf{Persistence}$ by simply setting $K = K_2$ and using the same $\ket{\sigma}$.
Analogously, the quantum algorithm for $\mathsf{Harmonic}$ $\mathsf{Persistence}$ can be used to solve $\mathsf{Harmonics}$.
    
\end{proof}

\subsection{Proof of Theorem~\ref{thm:main}}
\label{sec:proof_main}

\begin{proposition}
\label{prop:hard}
    $\delta\text{-}\mathsf{Harmonic}$ $\mathsf{Persistence}$ is $\mathsf{BQP}_1$-hard for $\delta=1-1/\mathrm{poly}(n)$.
\end{proposition}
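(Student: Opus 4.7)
I would reduce from an arbitrary $L \in \mathsf{BQP}_1$ via the Feynman--Kitaev clock construction and the gadget machinery of~\cite{king:qma}, combined with a padding trick that makes the resulting history state close to a product state and hence close to a subset state. Given input $x$ and a polynomial-size verification circuit $C_x$, I would first amplify so that YES-instances accept with probability $1$ and NO-instances with probability at most $2^{-n}$, and express the gates over the universal set $\{\mathrm{Toffoli}, \mathrm{Hadamard}\}$ so that every gate matrix is integer-valued up to a uniform $1/\sqrt{2}$ factor, which is the hypothesis required by the gadget construction of Section~\ref{sec:gadgets}.

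\textbf{Unitarized test and padding.} Introduce an ancilla $a$ and form $\widetilde{C}_x := C_x^\dagger \cdot \mathrm{CNOT}_{1,a} \cdot C_x$, so that $\widetilde{C}_x \ket{0^n, 0_a} = \ket{0^n, 1_a}$ exactly in YES and $|\bra{0^n, 1_a}\widetilde{C}_x\ket{0^n, 0_a}| \leq 2^{-n}$ in NO. Pad with identity gates to extend the circuit length from $T_0$ to $T' = T_0 \cdot \mathrm{poly}(n)$. The key point is that in YES the padded circuit satisfies $\widetilde{C}_{x,t}\ket{0^n, 0_a} = \ket{0^n, 1_a}$ for every $t \geq T_0$, i.e. for a $(1 - 1/\mathrm{poly}(n))$-fraction of clock ticks.

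\textbf{Clock Hamiltonian and clique-complex encoding.} Form the Feynman--Kitaev Hamiltonian $H = H_{in} + H_{prop} + H_{out}$ of the padded $\widetilde{C}_x$, with $H_{out} = \ket{0}\bra{0}_a \otimes \ket{T'}\bra{T'}_c$. Kitaev's analysis gives a spectral gap $\Omega(1/\mathrm{poly}(n))$ with $\ker H = \mathrm{span}\{\ket{\eta}\}$ in YES (where $\ket{\eta}$ is the history state) and $\ker H = \{0\}$ in NO, and our gate-set choice decomposes $H$ into a polynomial-size sum of rank-$1$ projectors on integer states. Feeding these into the gadget machinery of Section~\ref{sec:gadgets} yields weighted clique complexes $K_1 \subseteq K_2$ on $m = n + 1 + O(\log T')$ qubit-factors in which, restricted to the qubit-encoded subspace $s(\mathbb{C}^{2^m})$, the harmonic subspace $\mathcal{H}_{2m-1}(K_2)$ is isomorphic to $\ker H$ up to $O(\lambda)$ perturbations in the gadget weight $\lambda$, and $\Delta_{2m-1}^{K_2}$ inherits a $1/\mathrm{poly}(n)$ spectral gap.

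\textbf{Guide subset state, overlap analysis, and main obstacle.} Take $\ket{\sigma}$ to be the subset state encoding the product state $\ket{0^n, 1_a} \otimes \ket{+}_c$: the four edges of the ``$0$''-cycle of each of the $n$ data factors of $\mathcal{G}$, the four edges of the ``$1$''-cycle of the ancilla factor, and all eight edges of each clock factor, with orientations fixed so the relevant cycles appear as positive chains. A direct computation then gives $\braket{0^n, 1_a, +_c | \eta} \geq 1 - O(T_0/T') = 1 - 1/\mathrm{poly}(n)$ in the YES case, using that $\widetilde{C}_{x,t}\ket{0^n, 0_a} = \ket{0^n, 1_a}$ for $t \geq T_0$, so $\|\mathrm{proj}_{\mathcal{H}_{2m-1}(K_2)}(\ket{\sigma})\| \geq 1 - 1/\mathrm{poly}(n) - O(\lambda)$. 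In NO, $\ker H = \{0\}$ and the orthogonality of distinct simplices bounds the projection of the qubit-graph-supported $\ket{\sigma}$ by $O(\lambda) + 2^{-n}$, which is driven below $1/\exp(n)$ by choosing $\lambda$ sufficiently small while preserving the polynomial spectral gap. The main obstacle is the perturbative bookkeeping: extending the Schrieffer--Wolff-style analysis of~\cite{king:qma} so that it controls not only the kernel dimension but also the guide-state overlap $\bra{\sigma}\mathrm{proj}_{\mathcal{H}_{2m-1}(K_2)}\ket{\sigma}$ uniformly across YES and NO instances within the promised $1/\mathrm{poly}(n)$ gap.
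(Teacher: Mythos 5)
Your high-level strategy coincides with the paper's: reduce from $\mathsf{BQP}_1$ via a Feynman--Kitaev-type clock Hamiltonian, pad with identities so that a simple subset state has overlap $1-1/\mathrm{poly}(n)$ with the history state, and push everything through the gadget machinery of \cite{king:qma}. (The paper \emph{pre}-idles and guides with the ``pre-history'' state $\frac{1}{\sqrt{2L}}\sum_{t\leq L}(\ket{C_t}+\ket{C'_t})\otimes\ket{0^m}$, whereas you post-pad and guide with $(\text{final state})\otimes(\text{uniform clock})$; these are mirror-image tricks and either works for completeness.) However, there are genuine gaps.

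The most serious one is soundness. You bound the NO-case projection by $O(\lambda)+2^{-n}$ and propose to drive it below $1/\exp(n)$ by shrinking $\lambda$, ``while preserving the polynomial spectral gap.'' These two requirements are incompatible: the spectral gap of $\Delta^{K_2}_{2N-1}$ scales as a positive power of $\lambda$ times the gap of $H_x$ (in the paper, $\Omega(\lambda^{18}\ell^{-1}g)$), so $\lambda = 1/\exp(n)$ violates promise 1, while $\lambda = 1/\mathrm{poly}(n)$ leaves only an inverse-polynomial bound --- far from the required $1/\exp(n)$. The argument you are missing is that no perturbative bound is needed at all: in the NO case $\ker H_x = \{0\}$, and by the main theorem of \cite{king:qma} the homology (hence the harmonic space $\mathcal{H}_{2N-1}(K_2)$) is then exactly trivial, so the projection of \emph{any} chain is identically zero. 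This is how the paper closes the NO case.

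Two further ingredients of your encoding conflict with the hypotheses of the gadget construction. First, the gate set $\{\mathrm{Toffoli},\mathrm{Hadamard}\}$ does not produce integer states: the $\sqrt{2}$'s in the rank-one components of $H_{\mathrm{prop},t}$ for a Hadamard gate do not cancel into rational amplitudes, which is exactly why the paper (following \cite{crichigno2022clique,king:qma}) uses $\{\mathrm{CNOT}, U_{\mathrm{Phyth}}, I\}$ with the rational Pythagorean gate. Second, your binary $O(\log T')$-qubit clock makes the propagation projectors $O(\log n)$-local, and since the gadget error parameter scales as $\lambda^{\Theta(\mathrm{locality})}$, this again destroys the polynomial spectral gap; the paper instead uses Bravyi's $4$-local unary clock \cite{bravyi:sat}. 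With that clock, the uniform superposition over legal clock states is not a single product of edge sets (``all eight edges of each clock factor'' would superpose over all, mostly illegal, clock strings and have exponentially small overlap with the history state); it must be written as a union of polynomially many product subset states, which is precisely why the paper's input format admits states of the form $\ket{\bigcup_i S_i}$.
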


\begin{proof}

 Let $L = \left(L_{\mathrm{yes}}, L_{\mathrm{no}}\right)$ be a promise problem in $\mathsf{BQP}_1$, let $x \in L_{\mathrm{yes}} \cup L_{\mathrm{no}}$ with $|x| = n$, and let $U_x = U_T\cdot\dots\cdot U_1$ be a quantum circuit on $m = \mathrm{poly}(n)$ qubits of size $T=\mathrm{poly}(n)$ that decides $x$.
 Specifically, we have:
 \begin{itemize}
     \item If $x \in L_{\mathrm{yes}}$, then $\mathrm{Tr}\left[\left(U_x\ket{0^m}\bra{0^m}U^\dagger_x\right) \cdot \ket{1}\bra{1}_1\right] = 1$,
     \item If $x \in L_{\mathrm{no}}$, then $\mathrm{Tr}\left[\left(U_x\ket{0^m}\bra{0^m}U^\dagger_x\right) \cdot \ket{0}\bra{0}_1\right] \geq 1-1/\mathrm{poly}(n)$.
 \end{itemize}
For reasons that will become clear in the proof of Lemma~\ref{lemma:completeness}, we will pre-idle $U_x$ by adding $L = \mathcal{O}\left(\mathrm{poly}(n)\right)$ identity gates.
That is, without loss of generality, we consider circuits of the form $U_x = U_T\cdots U_1 \cdot I^L$.

Next, we are going to utilize the construction of King \& Kohler~\cite{king:qma}.
In particular, we first use the construction of~\cite{bravyi:sat} to construct a $4$-local Hamiltonian on $\mathfrak{H} = \left(\mathbb{C}^{2}\right)^{\otimes m}_{\mathrm{comp}} \otimes \left(\mathbb{C}^{2}\right)^{\otimes 2(T+L)}_{\mathrm{clock}}$ of the form
{
\begin{equation}
\label{eq:kitaev_qsat_ham}
H_x=H_{\mathrm{in}}+ H_{\mathrm{prop}} + H_{\mathrm{clock}} + H_{\mathrm{out}}.
\end{equation}
We defer the full formal details on the construction of the Hamiltonian $H_x$ from~\cite{bravyi:sat} to Appendix~\ref{appendix:qsat}.
Importantly, $H_x$ can be written in the following form:
\begin{align}
    H_{x} &= \sum_{i=1}^{\ell} \ket{\phi_i}\bra{\phi_i},
\end{align}}
where $\ell = \mathcal{O}(\mathrm{poly}(n))$ and every $\ket{\phi_i}$ is a so-called ``integer state'' (see Sec.~\ref{sec:gadgets} and~\cite[Definition 8.1]{king:qma}).
Note that because we use the idling construction, our gate set is composed of $\{\mathrm{CNOT}, U_{\mathrm{Pyth}}, I\}$ where $U_{\mathrm{Pyth}}$ is the ``Pythagorean gate''~\cite{crichigno2022clique} 
\begin{equation}
\label{eq:pyth}
U_{\mathrm{Pyth}}=
\frac{1}{5}
\begin{pmatrix}
    3 & 4 \\
    -4 & 3
\end{pmatrix}.
\end{equation}

In the construction of the Hamiltonian of \cite{bravyi:sat} (see Appendix~\ref{appendix:qsat}), the term that is dependent on the quantum circuit is 
$$
H_{\mathrm{prop}}: = \sum_{t=1}^T H_{\mathrm{prop},t} + \sum_{t=1}^{T-1} H'_{\mathrm{prop},t},
$$
where 
\begin{align*}
   H_{\mathrm{prop},t} =\frac{1}{2} &\big[ (\ket{a1}\bra{a1}+\ket{a2}\bra{a2})_t \otimes I_{\mathrm{comp}}\\& - 
\ket{a2}\bra{a1}_t \otimes U_t - \ket{a1}\bra{a2}_t \otimes U^\dagger_t \big].  
\end{align*}
The second term $H'_{\mathrm{prop},t}$ is not dependent on the quantum circuit. 
Suppose that $t$-th gate $U_t$ is $I$. Then,
\begin{align*}
    H_{\mathrm{prop},t} &=\frac{1}{2} [ (\ket{a1}\bra{a1}+\ket{a2}\bra{a2})_t \otimes I_{\mathrm{comp}} \\&\ \ \ \  - 
\ket{a2}\bra{a1}_t \otimes I_{\mathrm{comp}} - \ket{a1}\bra{a2}_t \otimes I_{\mathrm{comp}}] \\
&= \left(\frac{1}{\sqrt{2}}(\ket{a1}-\ket{a2})\right) \left(\frac{1}{\sqrt{2}}(\bra{a1}-\bra{a2})\right) \otimes I_{\mathrm{comp}}\\
&= \left(\frac{1}{\sqrt{2}}(\ket{01}-\ket{10})\right) \left(\frac{1}{\sqrt{2}}(\bra{01}-\bra{10})\right) \otimes I_{\mathrm{comp}}
.
\end{align*}
It is clear that the construction of \cite{king:qma} can be used for this projector.

{Let us define}
\begin{align}
    \label{eq:hist}
\ket{\psi_{\mathrm{hist}}}=
\frac{1}{\sqrt{2(T+L)}} \sum_{t=1}^{T+L}\left(
\ket{C_t}\otimes \ket{\psi_{t-1}} + \ket{C'_t}\otimes \ket{\psi_t}\right),
\end{align}
where $\ket{C_t}$ and $\ket{C'_t}$ are $2(T+L)$-qubit computational basis states defined as
\begin{align*}
  & \ket{C_t}= |\underbrace{11,...,11}_{t-1},01,\underbrace{00,...,00}_{T+L-t}\rangle, \\&
\ket{C'_t}= |\underbrace{11,...,11}_{t-1},10,\underbrace{{00,...,00}}_{T+L-t}\rangle, 
\end{align*}
and $\ket{\psi_t}=U_t U_{t-1}...U_1 \ket{\psi_0}$ with $\ket{\psi_0}:= \ket{0^m}$. 
{It is known \cite{bravyi:sat} that $\ket{\psi_{\mathrm{hist}}}$ spans the 1-dimensional kernel\footnote{{In the case of $\mathsf{BQP}_1$ verifier, there is no freedom for the choice of witness register. Therefore, the history state w.r.t. the unique initial state is the only ground state of $H_{\mathrm{hist}}$.  }} of 
$$H_{\mathrm{hist}}=H_{\mathrm{in}}+ H_{\mathrm{prop}} + H_{\mathrm{clock}}.$$} 
{If $x \in L_{\text{yes}}$, $\ket{\psi_{\mathrm{hist}}}$ is not penalized by $H_{\mathrm{out}}$ at all. 
Therefore, 
we have $\ket{\psi_{\mathrm{hist}}} \in \ker H_x$ if $x \in L_{\text{yes}}$. 
On the other hand, 
if $x \in L_{\text{no}}$, 
$\ket{\psi_{\mathrm{hist}}}$ gets energy from a positive semi-definite term $H_{\mathrm{out}}$, which means that there is no frustration-free ground state of $H_x$
and therefore $\dim \ker H_x = 0$ if $x \in L_{\text{no}}$.}

We can show the following lemma regarding the minimum non-zero eigenvalue of $H_x$.
\begin{lemma}
\label{lemma:spectralgap}
    For the above $H_x$, we have $g = \gamma(H_x) \geq \frac{1}{\mathrm{poly}(n)}$ for both $x \in L_{\mathrm{yes}}$ and $x \in L_{\mathrm{no}}$.
\end{lemma}
\begin{proof}
    If $x \in L_{\mathrm{no}}$, it holds that $\gamma(H_x) \geq \frac{1}{\mathrm{poly}(n)}$ as shown in~\cite{bravyi:sat}. 
    Then, let us consider the case $x \in L_{\mathrm{yes}}$.
    First, $H_{\mathrm{hist}}=H_{\mathrm{in}}+H_{\mathrm{prop}}{+H_{\mathrm{clock}}}$ is a Hamiltonian such that $\dim\ker (H_{\mathrm{hist}})=1$ {as previously mentioned}. 
    {Moreover, $g'= \gamma(H_{\mathrm{hist}})\geq \frac{1}{\mathrm{poly}(n)}$, see Lemma~\ref{lemma:gap_hist} in Appendix~\ref{appendix:qsat}}.  
    The full Hamiltonian is
    $H_x= H_{\mathrm{hist}} + H_{\mathrm{out}}$, where $H_{\mathrm{out}}$ is also a projector. 
    {Let $\ket{\psi_{\mathrm{hist}}}$ be the ground state of $H_{\mathrm{hist}}$ as defined in Eq.~\eqref{eq:hist}.} 
    For $x \in L_{\mathrm{yes}}$, we know that $\dim\ker (H_x)=1$ and $\ket{\psi_{\mathrm{hist}}}\in \ker (H_x)$. 
    Therefore, for any normalized state $\ket{\varphi}$ in $(\ker H_x)^\perp$, it holds that 
    $$
    \bra{\varphi} H_x \ket{\varphi} \geq \bra{\varphi} H_{\mathrm{hist}} \ket{\varphi} \geq g',
    $$
    which concludes the proof of the lemma. 
\end{proof}

Let $N=m+ 2(T+L)$ be the total number of qubits. 
Using the notation on pg.\ 9 of~\cite{king:qma}, we construct two simplicial complexes $K_1 = \mathrm{cl}(\mathcal{G}_N)$ and $K_2 = \mathrm{cl}(\widehat{\mathcal{G}}_N)$ on $\mathrm{poly}(n)$ vertices by using vertex weight $\lambda = c \ell^{-1} g$, where using $\ast$ to denote the graph join we have
\begin{align} 
\label{eq:base_graph}
\mathcal{G}_N = \basegraph_1 \ast \dots \ast \basegraph_N 
\end{align}
and $\widehat{\mathcal{G}}_N$ is obtained through adding the gadgets to $\mathcal{G}_N$
 for every term $\ket{\phi_i}\bra{\phi_i}$ in $H_x$ (see~\cite[Section 10]{king:qma} for more details). 
Specifically, for these simplicial complexes it holds that $K_1 \subset K_2$ and that there is an isomorphism of Hilbert spaces $\mathrm{s}: \mathfrak{H} \rightarrow C_{2N-1}(K_1)$ whose image we denote $s(\mathfrak{H}) = \mathfrak{H}_{\mathrm{sim}}$ and which we refer to as the ``simulated qubit space''.

\begin{lemma}
\label{lemma:harmonic}

Consider 
$$\ket{\psi_{\mathrm{prehist}}}=
\frac{1}{\sqrt{2L}} \left(\sum_{t=1}^{L}
\ket{C_t}\otimes \ket{0^m} + \ket{C'_t}\otimes \ket{0^m}\right).$$
Then, $\ket{\sigma_{\mathrm{prehist}}}$ is a subset state with an efficient description. 
Moreover, $\ket{\sigma_{\mathrm{prehist}}} \in \mathcal{H}_{2N-1}(K_1)$.
\end{lemma}

\begin{proof}
    Any computational basis state $\ket{x} \in \mathfrak{H}$ where $x=x_1 x_1 ... x_{N} \in \{0,1\}^N$ is mapped by $s$ to 
    $$s(\ket{x})=\ket{\sigma_x}:=
    \ket{\sigma_{x_1}}\otimes \cdots \otimes \ket{ \sigma_{x_N}}
    $$
    where $\ket{\sigma_{x_i}}$ represents a 1-dimensional cycle corresponding to the qubit gadget graph, i.e., a left cycle for $\ket{0}$ and a right cycle for $\ket{1}$ in the $i$th copy of the base graph
    $$\basegraph .$$
    Because $\sigma_{x_i}$ is a uniform superposition of 4-number of 1-simplices, $\ket{\sigma_x}$ is a uniform superposition of $4^N$-number of $(2N-1)$-simplices. 
    Using the notation in Sec.~\ref{sec:subsetstate}, $s(\ket{x})$ is a subset state for the subset
    $$S_x:=E^{(1)}_{x_1}\times E^{(2)}_{x_2}\times \cdots \times E^{(N)}_{x_N},$$
    where $E^{(i)}_{x_i}$ are sets of four edges corresponding to the cycles. 
    Then, 
    $$\ket{\sigma_{\mathrm{prehist}}}= \frac{1}{\sqrt{2L}}
    \left(\sum_{t=1}^{L} 
s(\ket{C_t,0^m}) + s(\ket{C'_t,0^m})\right)
  $$ 
    is 
    a uniform superposition of $2L\cdot4^N$-number of $(2N-1)$-dimensional simplices in $K_1$. 
    $\ket{\sigma_{\mathrm{prehist}}}$ can be efficiently described by 
    $$
    S_{C_1,0^m}, S_{C_1',0^m},...,S_{C_L,0^m}, S_{C_L',0^m}.
    $$
    The proof of $\ket{\sigma_{\mathrm{prehist}}} \in \mathcal{H}_{2N-1}(K_1)$ is straightforward because $\partial_{2N-1}^{K_1} \ket{\sigma_{\mathrm{prehist}}}=0$ holds by construction and $(\partial_{2N}^{K_1})^\dagger\ket{\sigma_{\mathrm{prehist}}}=0$ also holds because there are no $2N$-dimensional simplices in $K_1$.
\end{proof}

\begin{lemma}
\label{lemma:spectral_gap}
$\gamma(\Delta_{2N-1}^{K_2}) = \Omega\left(\lambda^{18} \ell^{-1} g\right) \geq \frac{1}{\mathrm{poly}(n)}$.
\end{lemma}
\begin{proof}
The case $\dim \mathcal{H}_{2N-1}(K_2) = 0$ follows from the promise gap established in~\cite[Theorem 10.7]{king:qma}.
Next, let us consider the case $\dim \mathcal{H}_{2N-1}(K_2) \geq 1$. 
Suppose $|\sigma\rangle$ is a state orthogonal to $\mathcal{H}_{2N-1}(K_2)$ yet with energy $\langle\sigma|\Delta^{K-2}_{2N-1}|\sigma\rangle \leq \gamma(\Delta_{2N-1}^{K_2})$.
From Lemma \ref{lemma:overlap} it follows that $|\sigma\rangle$ has at least $0.99$ overlap with $s(\ker H_x)$. 
Next, we note that $s(\ker H_x)$ is a $\mathcal{O}(g)$-perturbation of $\mathcal{H}_{2N-1}(K_2)$ (see~\cite[Definition 7.15]{king:qma}). 
Finally, using Part 1 of \cite[Lemma 7.18]{king:qma} we find that the projector onto $s(\ker H_x)$ is within $\mathcal{O}(g)$ of the projector onto $\mathcal{H}_{2N-1}(K_2)$ in operator norm. 
Thus, $|\sigma\rangle$ also overlaps with $\mathcal{H}_{2N-1}(K_2)$, leading to a contradiction.
\end{proof}

\bigskip

\begin{lemma}
    \label{lemma:completeness}
    If $x \in L_{\mathrm{yes}}$, then there exists sufficiently large $L\in \mathrm{poly}(n)$ such that $||\mathrm{proj}_{\mathcal{H}_{2n-1}^{K_2}}(\ket{\sigma_{\mathrm{prehist}}})||^2 \geq 1-\frac{1}{\mathrm{poly}(n)}$.
\end{lemma}
\begin{proof}
Since $x \in L_{\mathrm{yes}}$, we know $\ket{\psi_{\mathrm{hist}}}$ in Eq.~\eqref{eq:hist} is in $\ker H_x$.
Following the arguments of~\cite{cade:glh}, we note that 
\begin{align}
    \label{eq:overlap}
    |\langle \psi_{\mathrm{prehist}}| \psi_{\mathrm{hist}} \rangle|^2 = \frac{L}{L + T } = 1 - \frac{T}{L+T},
\end{align}
which for sufficiently large $L=\mathcal{O}(\mathrm{poly}(n))$ can be made to scale as $1 - \frac{1}{r(n)}$, for any choice of polynomial~$r(n)$.

Let $\ket{\sigma_{\mathrm{hist}}} = s(\ket{\psi_{\mathrm{hist}}})$ and $\ket{\sigma_{\mathrm{prehist}}} = s(\ket{\psi_{\mathrm{prehist}}})$.
Since $s$ is an isomorphism of Hilbert spaces, we get from Eq.~\eqref{eq:overlap} that
\begin{align}
    \label{eq:simulated_overlap}
    |\langle \sigma_{\mathrm{prehist}}| \sigma_{\mathrm{hist}} \rangle|^2 \geq 1 - \frac{1}{\mathrm{poly}(n)}
\end{align}
To finish, we note that since $s(\ker H_x)$ is an $\mathcal{O}(g)$-perturbation of $\mathcal{H}_{2n-1}(K_2)$ (see Lemma~\ref{lemma:overlap}) we know that we can decompose
\begin{align}
    \label{eq:perturb_decomp}
    \ket{\sigma_{\mathrm{hist}}} = \ket{\sigma_{\mathrm{harm}}} + \ket{\sigma_{\mathrm{rest}}},
\end{align}
for some $\ket{\sigma_{\mathrm{harm}}} \in \mathcal{H}_{2n-1}(K_2)$ and $||\ket{\sigma_{\mathrm{rest}}}||= \mathcal{O}(g)$.
Using the above decomposition, we find that
\begin{align}
||&\mathrm{proj}_{\mathcal{H}_{2n-1}^{K_2}}(\ket{\sigma_{\mathrm{prehist}}})||^2 \\ &\geq |\langle \sigma_{\mathrm{prehist}}| \sigma_{\mathrm{harm}}\rangle|^2\\
    &= |\langle \sigma_{\mathrm{prehist}}| \big(|\sigma_{\mathrm{harm}}\rangle + \ket{\sigma_{\mathrm{rest}}} - \ket{\sigma_{\mathrm{rest}}}\big)|^2\\
    &= |\langle \sigma_{\mathrm{prehist}}| \sigma_{\mathrm{hist}} \rangle - \langle \sigma_{\mathrm{prehist}}|\sigma_{\mathrm{rest}}\rangle|^2\\
    &\geq ||\langle \sigma_{\mathrm{prehist}}| \sigma_{\mathrm{hist}} \rangle| - |\langle \sigma_{\mathrm{prehist}}|\sigma_{\mathrm{rest}}\rangle| |^2\\
    &\geq 1 - \frac{1}{\mathrm{poly}(n)}.
\end{align}

\end{proof}

\bigskip

\begin{lemma}
    \label{lemma:soundness}
    If $x \in L_{\mathrm{no}}$, then $||\mathrm{proj}_{\mathcal{H}_{2n-1}^{K_2}}(\ket{\sigma})|| = 0$.
\end{lemma}
\begin{proof}[Proof of Lemma~\ref{lemma:soundness}]

If $x \in L_{\mathrm{no}}$, then $\dim \ker H_x = 0$ and by Theorem 1.3 of~\cite{king:qma} we have $\mathcal{H}_{2n-1}^{K_2} = \{0\}$.
This implies that $\mathrm{proj}_{\mathcal{H}_{2n-1}^{K_2}}(.)$ is the map that maps everything to zero and thus that $||\mathrm{proj}_{\mathcal{H}_{2n-1}^{K_2}}(\ket{\sigma})|| = 0$.

\end{proof}

In conclusion, by combining all of the above Lemmas we find that the instance of $\mathsf{Harmonic}$ $\mathsf{Persistence}$ with input $(K_1, K_2)$, $2n-1$ and $\ket{\sigma_{\mathrm{prehist}}}$ allows us to decide whether $x \in L_{\mathrm{yes}}$ or $x \in L_{\mathrm{no}}$.

\end{proof}

\bigskip

\begin{proposition}
\label{prop:algorithm}
    For all $\delta \in \left[\frac{1}{\mathrm{poly}(n)}, 1\right]$ the problem $\delta\text{-}\mathsf{Harmonic}$ $\mathsf{Persistence}$ is in $\mathsf{BQP}$.
\end{proposition}

\begin{proof}

We apply quantum pase estimation on $U = e^{i\Delta_p^{K_2}}$ with $\ket{\sigma}$ prepared in the eigenvector register, and we measure the eigenvalue register up to precision $\frac{1}{2}\gamma(\Delta^{K_2}_p)$, where $\gamma(\Delta^{K_2}_p)$ is defined in the promise statement of \textsf{Harmonic Persistence} (for more details on the implementation see~\cite[Thm 11.2]{king:qma}). 
We repeat this a total of $\mathcal{O}\left(\mathrm{poly}(n)\right)$ many times, and we output 1 if we encounter a zero eigenvalue at any point, and otherwise we output 0 if we encounter no zero eigenvalues at all. Since $\gamma(\Delta^{K_2}_p) \geq \frac{1}{\mathrm{poly}(n)}$, and $\delta\geq1/\mathrm{poly}(n)$, this quantum algorithm correctly distinguishes $||\mathrm{proj}_{\mathcal{H}_p(K)}(\ket{\sigma})|| \geq \delta$ from $||\mathrm{proj}_{\mathcal{H}_p(K)}(\ket{\sigma})|| < \frac{1}{\mathrm{exp}(n)}$ using a quantum circuit of depth $\mathcal{O}(\mathrm{poly}(n))$.

\end{proof}

\section{Underlying quantum primitives}
\label{sec:final}

In this section, we generalize the problem introduced in Section~\ref{sec:harmonics} by removing the restriction of the input to be only combinatorial Laplacians.
In Section~\ref{subsec:problems_underlying}, we formally define the computational problems arising from this generalization and present our results. 
The proofs of these results are given in Section~\ref{subsec:proof_underlying1} and~\ref{subsec:proof_underlying2}.

\subsection{Problem definitions}
\label{subsec:problems_underlying}

The primary approach of this paper is to conceptualize the problem of \textsf{Harmonic Persistence} as an instance of the guided sparse Hamiltonian problem, wherein the Hamiltonian is represented by the combinatorial Laplacian, and the guiding state is the harmonic hole whose persistence we seek to analyze. 
However, this analysis is not entirely straightforward as the underlying quantum primitive is different from the traditional local Hamiltonian problem upon which the guided sparse Hamiltonian problem relies. 
Instead, the underlying primitives of \textsf{Harmonic Persistence} are what we call \textsf{Kernel overlap} or \textsf{Low-energy overlap} outlined below.

\medskip

\noindent\begin{minipage}{\linewidth}
{\underline{\textbf{\textsf{Kernel} \textsf{overlap}}}} \\[0.3em]
\textbf{Input:} \\ 1) An $n$-qubit log-local\footnotemark[1] Hamiltonian $H = \sum_{i=1}^m H_i$, with $m = \mathcal{O}(\mathrm{poly}(n))$.\\
 2) A succint description of a quantum state $\ket{\psi}$ (e.g., a ``subspace state'').\\[1em]

\textbf{Promises:}\\
 1) Spectral gap $\gamma(H) = \min \{|\lambda| \mid \lambda \in \mathrm{Spec}(H),\text{ }\lambda > 0\}\geq \frac{1}{\mathrm{poly}(n)}$.\\
\ 2) Either of the following holds:\\ $(a)$ $||\mathrm{proj}_{\ker(H)}(\ket{\psi})|| \geq \frac{1}{\mathrm{poly(n)}}$, or\\ $(b)$ $||\mathrm{proj}_{\ker(H)}(\ket{\psi})|| < \frac{1}{\mathrm{exp}(n)}$.\\[1em]
\end{minipage}
\noindent\begin{minipage}{\linewidth}
\textbf{Output:} \\ 1 if $||\mathrm{proj}_{\ker(H)}(\ket{\psi})|| \geq \frac{1}{\mathrm{poly(n)}}$, or\\ 0 if $||\mathrm{proj}_{\ker(H)}(\ket{\psi})|| < \frac{1}{\mathrm{exp}(n)}$.
\end{minipage}

\medskip

\noindent\begin{minipage}{\linewidth}
{\underline{\textbf{\textsf{Low-energy} \textsf{overlap}}}} \\[0.3em]
\textbf{Input:} \\1) An $n$-qubit log-local\footnotemark[1] Hamiltonian $H = \sum_{i=1}^m H_i$, with $m = \mathcal{O}(\mathrm{poly}(n))$.\\
 2) A threshold $\eta \in \Omega(1/\mathrm{poly}(n))$.\\
 3) A succinct description of a quantum state $\ket{\psi}$ 
.\\[1em]

\textbf{Promise:}\\
 Either of the following holds: \\  $(a)$ $||\mathrm{proj}_{E_{\leq \eta}(H)}(\ket{\psi})|| \geq \frac{1}{\mathrm{poly(n)}}$, or \\$(b)$ $||\mathrm{proj}_{E_{\leq \eta}(H)}(\ket{\psi})|| < \frac{1}{\mathrm{exp}(n)}$.
\\[1em]
\end{minipage}
\noindent\begin{minipage}{\linewidth}
\textbf{Output:} \\ 1 if $||\mathrm{proj}_{E_{\leq \eta}(H)}(\ket{\psi})|| \geq \frac{1}{\mathrm{poly(n)}}$, or\\ 0 if $||\mathrm{proj}_{E_{\leq \eta}(H)}(\ket{\psi})|| < \frac{1}{\mathrm{exp}(n)}$.
\end{minipage}
\begin{remark}
    Here we define $E_{\leq \eta}(H) = \mathrm{span}_{\mathbb{C}}\big\{\ket{\psi}\text{ } | \text{ }\ket{\psi}\text{ eigenvector with eigenvalue $< \eta$}\big\}$.
\end{remark}

As previously noted, these generalized versions of the \textsf{Harmonic Persistence} problem differ from the guided sparse Hamiltonian problem examined in studies such as \cite{cade:glh}. In the guided sparse Hamiltonian problem, a guiding state is provided that is guaranteed to have a significant overlap with the ground state in both YES and NO instances. 
In contrast, the \textsf{Kernel Overlap} and \textsf{Low-energy Overlap} problems involve a succinct description of a quantum state, and the task is to determine whether this state has overlap with the zero- or low-energy subspace. 
This fundamental distinction separates the \textsf{Kernel Overlap} and \textsf{Low-energy Overlap} problems from the guided sparse Hamiltonian problem.
Interestingly, one way to relate the original guided sparse Hamiltonian problem to \textsf{Low-energy Overlap} is through the application of a quantum singular value transform~\cite{gilyen:qsvt} to the input Hamiltonian $H$ to turn it into the projector onto the $E_{\leq \eta}(H)$.

The overlap problem is particularly noteworthy because, unlike the guided sparse Hamiltonian problem, it remains hard even when both the state $\ket{\psi}$ and the Hamiltonian $H$ are sparse. 
Specifically, while sparsity enables efficient classical evaluation of $\bra{\psi}H\ket{\psi}$, it does not necessarily facilitate the estimation of the overlap of $\ket{\psi}$ with the low-energy subspace.
Finally, we establish the complexity characterizations for the \textsf{Kernel Overlap} and \textsf{Low-energy Overlap} problems, the proofs of which are provided in Sections~\ref{subsec:proof_underlying1} and \ref{subsec:proof_underlying2}.

\footnotetext[1]{Theorems~\ref{thm:underlying1} and~\ref{thm:underlying2} still hold for Hamiltonians with a constant locality (the proofs are just slightly more convoluted).}

\begin{theorem}
\label{thm:underlying1}
    $\mathsf{Kernel}$ $\mathsf{overlap}$ is $\mathsf{BQP}_1$-hard and contained in $\mathsf{BQP}$.
\end{theorem}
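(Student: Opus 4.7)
Containment in $\mathsf{BQP}$ mirrors Proposition~\ref{prop:algorithm} almost verbatim: with $\ket{\psi}$ loaded into the eigenvector register and $U = e^{iH}$ implemented via standard log-local Hamiltonian simulation, quantum phase estimation to precision $\tfrac{1}{2}\gamma(H) \geq 1/\mathrm{poly}(n)$ yields outcome ``zero'' with probability exactly $\|\mathrm{proj}_{\ker(H)}(\ket{\psi})\|^2$. Repeating $\mathcal{O}(\mathrm{poly}(n))$ times and outputting $1$ iff a zero-phase outcome is ever observed separates the two promise cases, since the spectral-gap assumption guarantees that no nonzero eigenvalue can masquerade as zero at this precision.

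For $\mathsf{BQP}_1$-hardness I would reduce from an arbitrary $\mathsf{BQP}_1$ promise problem $L$, recycling the Kitaev--Bravyi circuit-to-Hamiltonian construction already used inside the proof of Proposition~\ref{prop:hard}---but now stripped of the homology-encoding layer of~\cite{king:qma}. Concretely, given a $\mathsf{BQP}_1$ verifier $U_x = U_T \cdots U_1$ pre-idled with $L = \mathrm{poly}(n)$ identity gates, I would take $H$ to be the $4$-local Kitaev--Bravyi Hamiltonian $H_x = \sum_{i=1}^{\ell}\ket{\phi_i}\bra{\phi_i}$ from~\eqref{eq:kitaev_qsat_ham}, and the guiding state to be the pre-history state
\[
\ket{\psi_{\mathrm{prehist}}} = \frac{1}{\sqrt{2L}}\sum_{t=1}^{L}\bigl(\ket{C_t} + \ket{C'_t}\bigr)\otimes \ket{0^m},
\]
which is a uniform superposition over $2L$ explicit computational basis states and thus trivially admits a $\mathrm{poly}(n)$-size description fitting any reasonable notion of ``succinct''. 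Lemma~\ref{lemma:spectralgap} supplies the required spectral gap $\gamma(H_x) \geq 1/\mathrm{poly}(n)$ in both branches.

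It then remains to verify completeness and soundness. If $x \in L_{\mathrm{yes}}$, the history state $\ket{\psi_{\mathrm{hist}}}$ of~\eqref{eq:hist} lies in $\ker(H_x)$, and the calculation $|\langle \psi_{\mathrm{prehist}} | \psi_{\mathrm{hist}}\rangle|^2 = L/(L+T)$ from~\eqref{eq:overlap} can be driven to $1 - 1/\mathrm{poly}(n)$ by choosing $L$ sufficiently large, so $\|\mathrm{proj}_{\ker(H_x)}(\ket{\psi_{\mathrm{prehist}}})\| \geq 1/\mathrm{poly}(n)$. If $x \in L_{\mathrm{no}}$, then $\dim \ker(H_x) = 0$ and the projection vanishes identically, trivially satisfying the $< 1/\exp(n)$ branch. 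The point worth emphasising---and essentially the only place a reader might expect difficulty---is that, unlike in Proposition~\ref{prop:hard}, no homology encoding, perturbation bound on $s(\ker H_x)$ versus $\mathcal{H}_{2N-1}(K_2)$, or subset-state gymnastics are needed here. This theorem is genuinely the ``bare'' quantum primitive sitting underneath the homological machinery of Theorem~\ref{thm:main}, which is why the proof collapses to little more than invoking the Kitaev--Bravyi construction, writing down $\ket{\psi_{\mathrm{prehist}}}$, and reading off overlaps.
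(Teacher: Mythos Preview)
Your proposal is correct and follows essentially the same strategy as the paper: QPE for containment in $\mathsf{BQP}$, and a pre-idled circuit-to-Hamiltonian reduction with the pre-history guiding state for $\mathsf{BQP}_1$-hardness. The only cosmetic difference is that the paper opts for the standard Kitaev Hamiltonian with a binary $\mathcal{O}(\log(T+L))$-qubit clock (Eq.~\eqref{eq:kitaev_ham}) rather than reusing the Bravyi $4$-local construction from~\eqref{eq:kitaev_qsat_ham} as you do, but both choices satisfy the log-locality requirement and yield the result equally well.
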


\begin{theorem}
\label{thm:underlying2}
    $\mathsf{Low}$-$\mathsf{energy}$ $\mathsf{overlap}$ is $\mathsf{BQP}$-complete.
\end{theorem}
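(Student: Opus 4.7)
The plan is to prove $\mathsf{BQP}$-completeness by separately establishing containment in $\mathsf{BQP}$ via QSVT-based spectral projection and amplitude estimation, and $\mathsf{BQP}$-hardness through a reduction essentially identical to Proposition~\ref{prop:hard} but stopping one step short of compiling the Kitaev--Bravyi Hamiltonian into a clique complex.

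For containment, the algorithm first prepares $\ket{\psi}$ from its succinct description and block-encodes the log-local Hamiltonian $H$ via a standard LCU construction over its $\mathrm{poly}(n)$ local terms. It then uses quantum singular value transformation~\cite{gilyen:qsvt} to apply a polynomial approximation $p(H)$ of the spectral projector onto $E_{\leq \eta}(H)$, where $p$ is a smoothed step function with inverse-polynomial transition width, and estimates $\|p(H)\ket{\psi}\|^2$ via amplitude estimation. To handle spectral mass that could sit in the narrow transition region just above $\eta$, I would randomize the QSVT cutoff uniformly inside a small window $[\eta - 1/\mathrm{poly}(n),\eta]$ and average the estimates; an averaging argument shows that in the YES case the expected accepted weight remains $\Omega(1/\mathrm{poly}(n))$ (the promised mass below $\eta$ contributes on average at least half of its weight across random cutoffs), whereas in the NO case the estimate remains at most $1/\exp(n)$ plus an inverse-exponential QSVT tail error, so the two cases can be distinguished by Chernoff amplification.

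For $\mathsf{BQP}$-hardness, given a promise language $L \in \mathsf{BQP}$ and input $x$, I would first apply standard error reduction together with identity-idling to obtain a verifier $U_x = U_T \cdots U_1 \cdot I^L$ with acceptance probability $\geq 1 - 1/\exp(n)$ on YES-instances and $\leq 1/\exp(n)$ on NO-instances. Following Proposition~\ref{prop:hard}, I construct the $4$-local Bravyi clock Hamiltonian $H_x = H_{\mathrm{in}} + H_{\mathrm{prop}} + H_{\mathrm{out}}$ from Appendix~\ref{appendix:qsat} (which is log-local), take the guide $\ket{\psi} = \ket{\psi_{\mathrm{prehist}}}$ with its explicit $\mathrm{poly}(n)$-size description, and set the threshold $\eta = c/\mathrm{poly}(n)$ for a suitable constant strictly between the YES-case ground-state energy ($\leq 1/\exp(n)$) and the NO-case spectral gap ($\geq 1/\mathrm{poly}(n)$ by Lemma~\ref{lemma:spectralgap}). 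Completeness then follows as in Lemma~\ref{lemma:completeness}: $\ket{\psi_{\mathrm{hist}}} \in E_{\leq \eta}(H_x)$ and $|\braket{\psi_{\mathrm{prehist}}|\psi_{\mathrm{hist}}}|^2 \geq L/(L+T) \geq 1 - 1/\mathrm{poly}(n)$ for sufficiently large $L = \mathrm{poly}(n)$. Soundness follows as in Lemma~\ref{lemma:soundness}: in NO-instances the whole spectrum of $H_x$ lies strictly above $\eta$, so $E_{\leq \eta}(H_x) = \{0\}$ and the overlap is exactly zero.

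The hard part will be the containment step. Unlike the combinatorial Laplacian setting of Proposition~\ref{prop:algorithm}, in which a spectral gap around zero was provided as part of the promise and allowed a single pass of QPE to cleanly separate zero from non-zero eigenvalues, the $\mathsf{Low}$-$\mathsf{energy}$ $\mathsf{overlap}$ promise only constrains spectral weight strictly below $\eta$ and says nothing about a possibly large weight in an inverse-polynomial window above $\eta$. Making the QSVT-plus-randomized-cutoff strategy work rigorously, so that the exponential separation in the promise dominates any polynomially-bounded slop contributed by the transition region, is where the main technical care is required.
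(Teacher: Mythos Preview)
Your hardness reduction matches the paper's (Proposition~\ref{prop:underlying2_hard}): pre-idle the verifier, build a Kitaev-type circuit Hamiltonian, use $\ket{\psi_{\mathrm{prehist}}}$ as guide, and set $\eta$ between the YES-case ground energy and the NO-case gap. Swapping the binary-clock Hamiltonian of Eq.~\eqref{eq:kitaev_ham} for the $4$-local Bravyi clock of Appendix~\ref{appendix:qsat}, and adding explicit error reduction, are cosmetic.

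For containment the paper gives only a one-line QPE argument (Proposition~\ref{prop:underlying2_algorithm}); you go further, correctly flagging that the problem carries no spectral-gap promise at $\eta$, and propose a QSVT step with cutoff randomized in $[\eta-1/\mathrm{poly}(n),\eta]$. The averaging claim, however, does not hold. Take a YES instance whose entire promised $1/\mathrm{poly}(n)$ weight sits at a single eigenvalue $\lambda=\eta-2^{-n}$: for every cutoff $c\le\eta$ this eigenvalue lies at most $2^{-n}$ below $c$, so an inverse-polynomial-width QSVT step never evaluates above roughly $\tfrac12$ at $\lambda$, and drops to essentially $0$ once $c$ falls more than one transition width below $\lambda$. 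Compare with the symmetric NO instance placing the same weight at $\eta+2^{-n}$: the two spectral measures differ only by a $2^{-n+1}$ shift of one eigenvalue, so for every fixed $c$ the value $|p_c(\lambda)|^2$ changes by at most $O(\mathrm{poly}(n)\cdot 2^{-n})$, and hence so does the average over $c$. Your parenthetical that the below-$\eta$ mass ``contributes on average at least half of its weight'' therefore fails exactly in the adversarial regime you set out to handle, and the randomized-cutoff scheme cannot separate such YES/NO pairs with polynomial resources.
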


\subsection{Proof of Theorem~\ref{thm:underlying1}}
\label{subsec:proof_underlying1}

\begin{proposition}
\label{prop:underlying1_hard}
    \textsf{Kernel Overlap} is $\mathsf{BQP}_1$-hard.
\end{proposition}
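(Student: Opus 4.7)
The plan is to reproduce the reduction of Proposition~\ref{prop:hard} while bypassing the entire clique-complex encoding, since \textsf{Kernel Overlap} accepts an arbitrary log-local Hamiltonian directly and so the ``simulated qubit space'' machinery of~\cite{king:qma} is no longer needed. Concretely, starting from a $\mathsf{BQP}_1$ language $L$, an input $x$ of size $n$, and a quantum circuit $U_x = U_T\cdots U_1$ deciding $x$ with perfect completeness, I would first pre-idle the circuit by prepending $L = \mathrm{poly}(n)$ identity gates, and then invoke the Bravyi--Kitaev construction (Appendix~\ref{appendix:qsat}) to produce a $\mathrm{poly}(n)$-sized, $4$-local Hamiltonian $H_x = \sum_{i=1}^{\ell}\ket{\phi_i}\bra{\phi_i}$ with the two standard properties: $\ket{\psi_{\mathrm{hist}}} \in \ker(H_x)$ when $x \in L_{\mathrm{yes}}$, and $\dim\ker(H_x) = 0$ when $x \in L_{\mathrm{no}}$. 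Lemma~\ref{lemma:spectralgap} already yields $\gamma(H_x) \geq 1/\mathrm{poly}(n)$ in either case, matching the spectral-gap promise of \textsf{Kernel Overlap}.

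For the guiding state I would take $\ket{\psi} = \ket{\psi_{\mathrm{prehist}}}$ as in Eq.~\eqref{eq:hist} restricted to the idling portion, i.e.\ a uniform superposition over $2L$ explicit computational-basis clock states tensored with $\ket{0^m}$ on the computation register. This is manifestly a subset state over the computational basis, so it admits a trivial $\mathrm{poly}(n)$-size description. Unlike in Proposition~\ref{prop:hard}, I do not need to argue that this state lies in any harmonic subspace, nor do I need to transport it through the isomorphism $s$ into a chain space; it is fed verbatim to the \textsf{Kernel Overlap} instance.

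Completeness and soundness then follow immediately. In the YES case, the prehistory/history overlap computation $|\braket{\psi_{\mathrm{prehist}} | \psi_{\mathrm{hist}}}|^2 = L/(L+T)$ of Eq.~\eqref{eq:overlap}, combined with $\ket{\psi_{\mathrm{hist}}} \in \ker(H_x)$, yields
$\|\mathrm{proj}_{\ker(H_x)}(\ket{\psi_{\mathrm{prehist}}})\| \geq |\braket{\psi_{\mathrm{prehist}} | \psi_{\mathrm{hist}}}| \geq \sqrt{1 - 1/\mathrm{poly}(n)}$
for a sufficiently large polynomial $L$, which is certainly $\geq 1/\mathrm{poly}(n)$. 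In the NO case, $\ker(H_x) = \{0\}$ forces $\|\mathrm{proj}_{\ker(H_x)}(\ket{\psi_{\mathrm{prehist}}})\| = 0 < 1/\exp(n)$. Hence a \textsf{Kernel Overlap} oracle decides $L$.

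There is no substantive obstacle here: the argument is a strict simplification of Proposition~\ref{prop:hard}, because we are spared the $\mathcal{O}(g)$-perturbation bookkeeping that arose from embedding $H_x$ into the combinatorial Laplacian of a clique complex (cf.\ Lemmas~\ref{lemma:spectral_gap} and~\ref{lemma:completeness}) and the resulting decomposition in Eq.~\eqref{eq:perturb_decomp}. The only point requiring care is to confirm that $\ket{\psi_{\mathrm{prehist}}}$ qualifies as a legal ``succinct description'' under whatever formalization is fixed for \textsf{Kernel Overlap}; since the state is a uniform superposition over an explicitly listed polynomial-sized set of computational basis states, any natural reading (subset state, poly-size preparation circuit, etc.) accommodates it.
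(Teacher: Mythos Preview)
Your proposal is correct and follows essentially the same approach as the paper: pre-idle the $\mathsf{BQP}_1$ circuit, apply a circuit-to-Hamiltonian construction, and use the ``prehistory'' state as the succinctly described guiding state, so that the YES/NO cases separate via the overlap with $\ket{\psi_{\mathrm{hist}}}$ versus $\ker(H_x)=\{0\}$.

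The one noteworthy difference is the choice of Hamiltonian encoding. You reuse the Bravyi quantum-SAT Hamiltonian $H_x=\sum_i \ket{\phi_i}\bra{\phi_i}$ from Appendix~\ref{appendix:qsat} (with its $2(T+L)$-qubit unary-style clock), which is $4$-local and lets you invoke Lemma~\ref{lemma:spectralgap} directly. The paper instead uses the plain Kitaev Hamiltonian $H_{\mathrm{in}}+H_{\mathrm{out}}+H_{\mathrm{prop}}$ with the clock encoded in $O(\log(T+L))$ qubits, which is log-local and yields a slightly different history state $\ket{\psi_{\mathrm{hist}}}=\frac{1}{\sqrt{T+L+1}}\sum_{t=0}^{T+L}U_t\cdots U_1\ket{0^m}\otimes\ket{t}$ and overlap $(L+1)/(L+T+1)$. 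Your version recycles the machinery already built for Proposition~\ref{prop:hard} and gives a constant-locality instance (hence a marginally stronger hardness statement, as the footnote acknowledges); the paper's version is self-contained and avoids the QSAT detour. Either way the reduction goes through with no substantive change.
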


\begin{proof}
 Let $L = \left(L_{\mathrm{yes}}, L_{\mathrm{no}}\right)$ be a promise problem in $\mathsf{BQP}_1$, let $x \in L_{\mathrm{yes}} \cup L_{\mathrm{no}}$ with $|x| = n$, and let $U_x = U_T\cdot\dots\cdot U_1$ be a quantum circuit on $m = \mathrm{poly}(n)$ qubits of depth $T=\mathrm{poly}(n)$ that decides $x$.
 Specifically, we have:
 \begin{itemize}
     \item If $x \in L_{\mathrm{yes}}$, then $$\mathrm{Tr}\left[\left(U_x\ket{0^m}\bra{0^m}U^\dagger_x\right) \cdot \ket{1}\bra{1}_1\right] = 1,$$
     \item If $x \in L_{\mathrm{no}}$, then $$\mathrm{Tr}\left[\left(U_x\ket{0^m}\bra{0^m}U^\dagger_x\right) \cdot \ket{0}\bra{0}_1\right] \geq 2/3$$
 \end{itemize}
We pre-idle $U_x$ by adding $L = \mathcal{O}\left(\mathrm{poly}(n)\right)$ identity gates.
That is, without loss of generality we consider circuits of the form $U_x = U_T\cdots U_1 \cdot I^L$.
Next, we consider a Hamiltonian on $\mathfrak{H} = \mathbb{C}^{2^n} \otimes \mathbb{C}^{T+L}$ defined as
\begin{align}
\label{eq:kitaev_ham}
    H_x = H_{\mathrm{in}} + H_{\mathrm{out}} + H_{\mathrm{prop}}, 
\end{align}
where
\begin{align}
    H_{\mathrm{in}} &= \left(\sum_{i=1}^m \ket{1}_i\bra{1}_i \right) \otimes \ket{0}\bra{0},\\
    H_{\mathrm{out}} &= \ket{0}\bra{0} \otimes \ket{T+L}\bra{T+L},\\
    H_{\mathrm{prop}} &= \sum_{t=1}^{T+L} I \otimes \ket{t-1}\bra{t-1} + I \otimes \ket{t}\bra{t} \\ &- U_t\otimes \ket{t}\bra{t-1} - U_t^\dagger \otimes \ket{t-1}\bra{t}.
\end{align}
{Note that $H_x$ in this section is different from that in Section~\ref{sec:harmonics}.}
We assume that the clock is represented by binary strings with $O(\log(T+L))$ qubits. 
Finally, we define the ``history state'' given by
\begin{align}
    \label{eq:history_underlying}
    \ket{\psi_{\mathrm{hist}}} = \frac{1}{\sqrt{T+L+1}}\sum_{t = 0}^{T+L} U_t\cdots U_1\ket{0^m} \otimes \ket{t}.
\end{align}
and note that $\gamma(H_x) = \min \{|\lambda| \mid \lambda \in \mathrm{Spec}(H),\text{ }\lambda > 0\}\geq \frac{1}{\mathrm{poly}(n)}$, and that the following properties hold:
\begin{itemize}
    \item $\ket{\psi_{\mathrm{hist}}} \in \ker H_x$ if $x \in L_{\mathrm{yes}}$,
    \item $\ker H_x = \{0\}$ if $x \in L_{\mathrm{no}}$.
\end{itemize}
{The fact that $\ker H_x = \{0\}$ if $x \in L_{\mathrm{no}}$ implies that there is no state that have overlap with $\ker H_x$ in this case.} 
To finalize the proof, we use the arguments of~\cite{cade:glh} and note that for 
\begin{align}
    \label{eq:prehist_underlying}
    \ket{\psi} = \frac{1}{\sqrt{L+1}}\sum_{t = 0}^{L} \ket{0^m} \otimes \ket{t}.
\end{align}
it holds that 
\begin{align}
    \label{eq:overlap_underlying}
    |\langle \psi| \psi_{\mathrm{hist}} \rangle|^2 = \frac{L+1}{L + T + 1} = 1 - \frac{T}{L+T+1},
\end{align}
which for sufficiently large $L=\mathcal{O}(\mathrm{poly}(n))$ can be made to scale as $1 - \frac{1}{r(n)}$, for any choice of polynomial~$r(n)$.  
In conclusion, the instance of $\mathsf{Kernel}$ corresponding to $H_x$ and $\ket{\psi}$ allows us to decide $x \in L_{\mathrm{yes}}$ or $x \in L_{\mathrm{no}}$.

\end{proof}

\begin{proposition}
\label{prop:underlying1_algorithm}
    \textsf{Kernel Overlap} is in $\mathsf{BQP}$.
\end{proposition}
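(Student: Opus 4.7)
The plan is to follow essentially the same recipe as in the proof of Proposition~\ref{prop:algorithm}: run quantum phase estimation (QPE) on the unitary $U = e^{iH}$ with $\ket{\psi}$ loaded into the eigenvector register, accept when the eigenvalue register reads zero, and boost the single-shot acceptance probability by repetition. Each individual ingredient has to be shown to run in polynomial time.

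First, I would argue that $\ket{\psi}$ can be prepared in $\mathrm{poly}(n)$ time from its succinct description; this is part of what it means for the description to be ``succinct'' in our setting, and it holds for the concrete examples we have in mind (in particular, subspace/subset states as defined in Section~\ref{sec:subsetstate}). Since $H = \sum_{i=1}^m H_i$ is log-local with $m = \mathcal{O}(\mathrm{poly}(n))$, it is $\mathrm{poly}(n)$-sparse with row entries computable by a small circuit, and $\|H\| \le \mathrm{poly}(n)$. Therefore $e^{iHt}$ can be implemented to inverse-polynomial error in $\mathrm{poly}(n)$ time by standard sparse-Hamiltonian simulation.

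Next, I would invoke QPE on this simulated unitary with precision $\varepsilon = \gamma(H)/3 = 1/\mathrm{poly}(n)$, which needs only $\mathcal{O}(\log(1/\gamma(H))) = \mathcal{O}(\log n)$ phase qubits and $\mathrm{poly}(n)$ gates overall. Expanding $\ket{\psi}$ in the eigenbasis of $H$, the probability that QPE outputs the label ``zero'' on a single run equals $\|\mathrm{proj}_{\ker(H)}(\ket{\psi})\|^{2}$, up to an additive error $\eta$ coming from the finite QPE precision and from Hamiltonian-simulation error. Because of the spectral-gap promise, any nonzero eigenvalue is at least $\gamma(H) > 2\varepsilon$ away from zero, so QPE reliably separates kernel from non-kernel contributions; $\eta$ can be made smaller than any target inverse polynomial by polynomial overhead.

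Finally, repeat the subroutine $N = \mathrm{poly}(n)$ times and output $1$ iff the empirical acceptance rate exceeds a threshold chosen between $1/\exp(n)+\eta$ and $1/\mathrm{poly}(n)-\eta$. In the YES case the single-shot acceptance probability is $\ge 1/\mathrm{poly}(n)-\eta$ and in the NO case it is $\le 1/\exp(n)+\eta$, so a standard Chernoff bound gives correctness with inverse-exponential error in $N$. The only step requiring any care is balancing the QPE precision, the simulation error, and the number of repetitions so that the accumulated error $\eta$ remains comfortably below the promised $1/\mathrm{poly}(n)$ gap; this is the same book-keeping that appears in Proposition~\ref{prop:algorithm} and in the usual guided-Hamiltonian BQP-containment proofs, and is the mildest possible obstacle.
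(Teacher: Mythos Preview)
Your proposal is correct and follows essentially the same approach as the paper's proof: run QPE on $e^{iH}$ with $\ket{\psi}$ in the eigenvector register to precision governed by the spectral gap $\gamma(H)$, repeat $\mathcal{O}(\mathrm{poly}(n))$ times, and accept if a zero eigenvalue is ever observed. The paper's version is terser (it simply points back to Proposition~\ref{prop:algorithm}), whereas you spell out the state-preparation, sparse-simulation, and Chernoff-bound bookkeeping, but the underlying argument is identical.
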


\begin{proof}

Analogous to Proposition~\ref{prop:algorithm}, we apply QPE on $U = e^{iH}$ with $\ket{\psi}$ prepared in the eigenvector register, and we measure the eigenvalue register up to precision $\gamma(H)$.
We repeat this $\mathcal{O}\left(\mathrm{poly}(n)\right)$ many times, and we output 1 if we encounter a zero eigenvalue, and we output 0 if we have not encountered a zero eigenvalue.

\end{proof}

\subsection{Proof of Theorem~\ref{thm:underlying2}}
\label{subsec:proof_underlying2}
\begin{proposition}
\label{prop:underlying2_hard}
    $\mathsf{Low}$-$\mathsf{energy}$ $\mathsf{overlap}$ is $\mathsf{BQP}$-hard.
\end{proposition}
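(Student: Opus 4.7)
The plan is to adapt the construction from the proof of Proposition~\ref{prop:underlying1_hard} (the $\mathsf{BQP}_1$-hardness of $\mathsf{Kernel}$ $\mathsf{overlap}$) to the bounded-error setting. Since $\mathsf{BQP}$ does not have perfect completeness, the Kitaev history state will no longer lie exactly in the kernel of the clock Hamiltonian; instead, it will sit inside the low-energy subspace, and the threshold $\eta$ must be picked so that the YES instance yields a nontrivial overlap with $E_{\leq \eta}(H)$ while the NO instance yields none at all.

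Concretely, given a language $L \in \mathsf{BQP}$ and input $x$ with $|x| = n$, I first amplify the $\mathsf{BQP}$ verifier so that its completeness is at least $1 - 2^{-n}$ and its soundness is at most $2^{-n}$; the amplified circuit $U_x$ still has size $T = \mathrm{poly}(n)$. I then pre-idle $U_x$ with $L = \mathrm{poly}(n)$ identity gates and build the Kitaev Hamiltonian $H_x = H_{\mathrm{in}} + H_{\mathrm{out}} + H_{\mathrm{prop}}$ exactly as in Eq.~\eqref{eq:kitaev_ham}. The succinct guiding state will be the prehistory state $\ket{\psi}$ of Eq.~\eqref{eq:prehist_underlying}, and I set $\eta = c/(2(T+L)^3)$, where $c$ is the standard constant arising in Kitaev's spectral-gap bound for clock Hamiltonians; this $\eta$ is $\Omega(1/\mathrm{poly}(n))$ as required.

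For the YES case, the history state $\ket{\psi_{\mathrm{hist}}}$ of Eq.~\eqref{eq:history_underlying} satisfies $\bra{\psi_{\mathrm{hist}}} H_x \ket{\psi_{\mathrm{hist}}} \leq 2^{-n}/(T+L+1)$, because only $H_{\mathrm{out}}$ contributes and its expectation equals the amplified rejection probability divided by the clock length. Writing $\ket{\psi_{\mathrm{hist}}} = \ket{\phi_{\leq}} + \ket{\phi_{>}}$ with $\ket{\phi_{\leq}} \in E_{\leq \eta}(H_x)$ and $\ket{\phi_{>}}$ in its orthogonal complement, a Markov-type estimate gives $\| \ket{\phi_{>}} \|^2 \leq 2^{-n}/\bigl(\eta (T+L+1)\bigr) = 2^{-n}\,\mathrm{poly}(n)$. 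Combined with the bound $|\langle \psi | \psi_{\mathrm{hist}} \rangle|^2 \geq 1 - 1/r(n)$ from Eq.~\eqref{eq:overlap_underlying}, the triangle inequality yields $\|\mathrm{proj}_{E_{\leq \eta}(H_x)}(\ket{\psi})\| \geq 1 - 1/\mathrm{poly}(n)$, which easily meets the $\Omega(1/\mathrm{poly}(n))$ promise threshold. For the NO case, the standard Kitaev soundness bound applied to $H_x$ guarantees that every eigenvalue of $H_x$ is at least $c/(T+L)^3 > \eta$, so $E_{\leq \eta}(H_x) = \{0\}$ and the overlap vanishes identically.

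The main obstacle is pinning down the constants cleanly: I need to verify that the Kitaev soundness lower bound of order $1/\mathrm{poly}(T+L)$ in the NO case survives both the strong amplification (where the accepting probability on no-instances is $2^{-n}$, not a constant) and the pre-idling step, and to choose $\eta$ strictly below this bound while keeping $\eta$ in $\Omega(1/\mathrm{poly}(n))$. This is a routine application of Kitaev's geometric lemma to $H_{\mathrm{in}} + H_{\mathrm{prop}}$ together with the projector $H_{\mathrm{out}}$, essentially identical to the textbook $\mathsf{QMA}$-to-local-Hamiltonian analysis, and the log-locality of $H_x$ follows immediately from the $O(\log(T+L))$-qubit binary clock encoding already used in the proof of Proposition~\ref{prop:underlying1_hard}.
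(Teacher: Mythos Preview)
Your proof is correct and follows essentially the same route as the paper's proof-sketch: the Kitaev clock Hamiltonian with pre-idling, the prehistory state $\ket{\psi}$ of Eq.~\eqref{eq:prehist_underlying} as the guiding state, and the observation that $E_{\leq\eta}(H_x)=\{0\}$ in the NO case by Kitaev's soundness bound. Your version is in fact more careful than the sketch---by amplifying the verifier to exponentially small error you make the YES-case Markov estimate go through cleanly and you fix $\eta$ to an explicit $1/\mathrm{poly}(n)$ value, whereas the paper merely asserts $\ket{\psi_{\mathrm{hist}}}\in E_{\leq\eta}(H_x)$ and sets $\eta=\gamma(H_x)/2$ without spelling these points out.
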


\begin{proof}

Let $L = \left(L_{\mathrm{yes}}, L_{\mathrm{no}}\right)$ be a promise problem in $\mathsf{BQP}$, let $x \in L_{\mathrm{yes}} \cup L_{\mathrm{no}}$ with $|x| = n$, and let $U_x = U_T\cdot\dots\cdot U_1$ be a quantum circuit on $m = \mathrm{poly}(n)$ qubits of depth $T=\mathrm{poly}(n)$ that decides $x$ {after a standard error reduction for $\mathsf{BQP}$ by parallel repetition}.
 Specifically, we have:
 \begin{itemize}
     \item If $x \in L_{\mathrm{yes}}$, then $$\mathrm{Tr}\left[\left(U_x\ket{0^m}\bra{0^m}U^\dagger_x\right) \cdot \ket{1}\bra{1}_1\right] {\geq 1-\mathcal{O}(2^{-n})},$$
     \item If $x \in L_{\mathrm{no}}$, then $$\mathrm{Tr}\left[\left(U_x\ket{0^m}\bra{0^m}U^\dagger_x\right) \cdot \ket{0}\bra{0}_1\right] {\geq 1- \mathcal{O}(2^{-n})}$$
 \end{itemize}
We pre-idle $U_x$ by adding $L = \mathcal{O}\left(\mathrm{poly}(n)\right)$ identity gates.
That is, without loss of generality we consider circuits of the form $U_x = U_T\cdots U_1 \cdot I^L$.
Next, consider the Hamiltonian on $\mathfrak{H} = \mathbb{C}^{2^n} \otimes \mathbb{C}^{T+L}$ from Eq.~\eqref{eq:kitaev_ham}.

Finally, we again consider {$H_x = H_{\mathrm{in}} + H_{\mathrm{out}} + H_{\mathrm{prop}}$ as in Eq.~\eqref{eq:kitaev_ham} and} the ``history state'' $\ket{\psi_{\mathrm{hist}}}$ from Eq.~\eqref{eq:history_underlying} and $\ket{\psi}$ from Eq.~\eqref{eq:prehist_underlying} and note that they satisfy Eq.~\eqref{eq:overlap_underlying}.
{
We note that
\begin{align*}
&\mathrm{Tr}[ H_{\mathrm{out}}\ket{\psi_{\mathrm{hist}}}\bra{\psi_{\mathrm{hist}}}]\\&=
\frac{1}{T+L+1} \mathrm{Tr}\left[\left(U_x\ket{0^m}\bra{0^m}U^\dagger_x\right) \cdot \ket{0}\bra{0}_1\right] \\&
\begin{cases}
    \leq \frac{1}{T+L+1}\cdot\mathcal{O}(2^{-n}) & \text{if } x \in L_{\mathrm{yes}}, \\
    \geq \frac{1}{T+L+1}\cdot(1-\mathcal{O}(2^{-n})) & \text{if } x \in L_{\mathrm{no}}.
\end{cases}
\end{align*}
Therefore, we can choose $\eta=\min \{\frac{1}{2(T+L+1)}, \gamma(H_{\mathrm{hist}})\}\geq 1/\mathrm{poly}(n)$ where $\gamma(H_{\mathrm{hist}})$ is the minimun non-zero eigenvalue of $H_{\mathrm{hist}}$, which is known to be larger than $1/\mathrm{poly}(n)$ \cite{gharibian2012hardness}, so that
}
\begin{itemize}
    \item $\ket{\psi_{\mathrm{hist}}} \in E_{\leq \eta}(H_x)$ if $x \in L_{\mathrm{yes}}$,
    \item $E_{\leq \eta}(H_x) = \{0\}$ if $x \in L_{\mathrm{no}}$.
\end{itemize}
In conclusion, $\mathsf{Low}$-$\mathsf{energy}$ $\mathsf{overlap}$ with input $(H_x, \eta, \ket{\psi})$ allows us to decide whether $x \in L_{\mathrm{yes}}$ or $x \in L_{\mathrm{no}}$.

\end{proof}

\begin{proposition}
\label{prop:underlying2_algorithm}
    $\mathsf{Low}$-$\mathsf{energy}$ $\mathsf{overlap}$ is in $\mathsf{BQP}$.
\end{proposition}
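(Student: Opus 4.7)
The plan is to generalize the QPE-based algorithm from Proposition~\ref{prop:underlying1_algorithm}: rather than testing whether the measured eigenvalue of $H$ equals zero (which suffices for \textsf{Kernel Overlap} thanks to its spectral gap promise), I will test whether it is strictly less than $\eta$, and use amplitude estimation (or $\mathrm{poly}(n)$ independent repetitions plus a Chernoff bound) to estimate the acceptance probability to $1/\mathrm{poly}(n)$ additive accuracy.

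Concretely, the algorithm would: (i) prepare $\ket{\psi}$ from its succinct description; (ii) simulate $U = e^{iH}$ via standard sparse Hamiltonian simulation, which is efficient because $H$ is log-local with $m = \mathcal{O}(\mathrm{poly}(n))$ terms; (iii) apply QPE to $U$ with $\ket{\psi}$ in the eigenvector register, using $M = \mathrm{poly}(n)$ ancillas so that the phase precision $\epsilon = 2^{-M}$ is polynomially (or exponentially) small; (iv) accept if the estimated eigenvalue is strictly less than $\eta$; (v) output $1$ iff the estimated acceptance probability exceeds $1/(2\mathrm{poly}(n))$. In case (a) this accepts with probability at least $(1 - 1/\mathrm{poly}(n))\,\|\mathrm{proj}_{E_{\leq \eta}(H)}(\ket{\psi})\|^2 \geq 1/\mathrm{poly}(n)$, while in case (b) the contribution from $E_{\leq \eta}(H)$ to the acceptance probability is bounded by $\|\mathrm{proj}_{E_{\leq \eta}(H)}(\ket{\psi})\|^2 < 1/\mathrm{exp}(n)$.

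The main obstacle --- absent in the \textsf{Kernel Overlap} setting, where the promised spectral gap $\gamma(H)$ cleanly separates zero from nonzero eigenvalues --- is that \textsf{Low-energy Overlap} carries no promised gap around $\eta$. Eigenvalues of $H$ lying just above $\eta$ can be misclassified by QPE as being ``$< \eta$'', and in case (b) these could in principle inflate the acceptance probability and produce false positives. I see two natural ways to resolve this. The first is to choose the QPE precision $\epsilon$ smaller than the minimum separation of any eigenvalue of $H$ from $\eta$; since the input $(H,\eta)$ has finite bit complexity, standard algebraic separation bounds guarantee that this minimum separation is at least inverse-exponential, so $M = \mathrm{poly}(n)$ ancillas suffice. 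The second, cleaner route is the one hinted at just before the statement of Theorem~\ref{thm:underlying2}: use the quantum singular value transform~\cite{gilyen:qsvt} to apply a polynomial approximation of the indicator function of $(-\infty,\eta)$ to $H$, thereby block-encoding an approximate projector onto $E_{\leq \eta}(H)$, and then estimate its action on $\ket{\psi}$ via amplitude estimation. Either route yields an efficient quantum algorithm distinguishing the two promise cases and hence places $\textsf{Low-energy overlap}$ in $\mathsf{BQP}$.
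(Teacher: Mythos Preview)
Your core algorithm---prepare $\ket{\psi}$, run QPE on $e^{iH}$, accept if the measured eigenvalue lies below $\eta$, and repeat $\mathrm{poly}(n)$ times---is exactly what the paper does; its proof is the same three-line sketch with the precision set to $\eta$. You are in fact more careful than the paper in that you explicitly flag the obstacle the paper elides: case~(b) carries no promised spectral gap at $\eta$, so in a NO instance $\ket{\psi}$ may have large overlap with eigenvectors of eigenvalue just above $\eta$, which QPE at any fixed polynomial precision can misclassify, causing false acceptance.

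However, neither of your two remedies runs in polynomial time. Phase estimation with $M$ ancilla qubits achieves precision $2^{-M}$ but costs $\Theta(2^M)$ controlled applications of $U$; taking $M=\mathrm{poly}(n)$ to resolve an inverse-exponential separation therefore costs exponentially many gates---you have conflated qubit count with gate count. (Separately, the eigenvalues of an $n$-qubit log-local Hamiltonian are roots of a degree-$2^n$ characteristic polynomial, so ``standard algebraic separation bounds'' can be far worse than the inverse-exponential you assume.) The QSVT route has the identical defect: any polynomial approximating $\mathbbm{1}_{(-\infty,\eta)}$ to useful accuracy outside a width-$\gamma$ transition window must have degree $\Omega(1/\gamma)$, so an inverse-exponential $\gamma$ forces exponential degree and hence exponential circuit depth. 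In short, you have correctly located a genuine gap---one the paper's own proof does not address either---but your proposed patches do not close it; absent an additional spectral-gap promise near $\eta$ (as is implicitly present in the hardness construction of Proposition~\ref{prop:underlying2_hard}, where $\eta=\gamma(H_x)/2$), the $\mathsf{BQP}$ containment does not follow from your argument as written.
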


\begin{proof}

Analogous to Proposition~\ref{prop:algorithm}, we apply quantum phase estimation on $U = e^{iH}$ with $\ket{\psi}$ prepared in the eigenvector register, and we measure the eigenvalue register up to precision $\eta$.
We repeat this $\mathcal{O}\left(\mathrm{poly}(n)\right)$ many times, and we output 1 if we encounter a zero eigenvalue, and we output 0 if we have not encountered a zero eigenvalue.

\end{proof}

\section{Discussion}
\label{sec:discussion}

In this section, we give several discussions and open questions. 

\subsection{On the definition of the harmonic persistence problem}

{Our problem of \textsf{Harmonic Persistence}
naturally characterize the problem of deciding the persistence of a hole given as a harmonic representative. 
However, as it is novelly characterized as {\it overlap estimation problems} of an input vector across subspaces of interest, 
 \textsf{Harmonic Persistence} differs from the {traditional} problem of determining the persistence of a hole in several subtle ways.}
Firstly, if a harmonic hole in $K_1$ persists to $K_2$ but gets exponentially deformed in the process, the original harmonic in $K_1$ might have an exponentially small overlap with the harmonic representative of $K_2$. 
However, such holes are arguably not the meaningful topological features in the data that TDA aims to extract.   
Secondly, it may happen that holes do not fully persist but instead move slightly out of the harmonic space of $K_2$, acquiring an exponentially small eigenvalue rather than zero. 
In these cases, the quantum algorithm may not distinguish this small eigenvalue from zero (a limitation present in many quantum algorithms for linear algebraic problems), meaning it mistakenly thinks the hole persists. 

In short, the \textsf{Harmonic Persistence} problem we study may produce both false positives and false negatives in contrived cases relative to the canonical problem. 
Although contrived examples can demonstrate such cases, we conjecture that these issues will not arise in typical instances (such as random Vietoris-Rips complexes).

Finally, while we conjecture that a similar result could hold for unweighted simplicial complexes, our current proof techniques rely on weighting. Nevertheless, weighted simplicial complexes have recently garnered significant attention in topological data analysis applications~\cite{kara:weighted, wang:weighted, sharma:weighted, baccini:weighted}, suggesting that the weighted variant is also valuable in practice.

\subsection{Path to practical quantum advantage in TDA}
{
While our result establishes an exponential quantum speedup in TDA based on a plausible complexity theoretic conjecture in the worst-case setting, the next important step is to strengthen this situation to a more practical setting for data that appear in real-world applications.  In this section, we discuss steps towards more practical quantum advantage in TDA. 
}
    \paragraph{Computing the topological summary}

{In many pipelines in TDA, persistent Betti numbers are computed for the construction of topological representation or summary of the data called barcode, persistence diagrams, persistence landscapes, or Betti curves. 
One interesting application of the \textsf{Harmonic Persistence} problem is the computation of such topological summaries because \textsf{Harmonic Persistence} can be used to estimate when a specific bar ends. 
\textsf{Harmonic Persistence} can be efficiently applied for the computation of topological summary when the number of initial holes is polynomially bounded, and we can somehow specify all the initial holes.
}

{
However, there is a dichotomy between quantum advantage and initial hole identification for this application. 
For example, in the early stage of the filtration, the underlying graph of the Vietoris-Rips complex is supposed to be sparse, and therefore holes are easier to find. 
It is certainly possible to apply our quantum algorithm to such sparse complexes. 
However, our results do not imply classical hardness for clique complexes for such sparse graphs. 
There is even a possibility that there is an efficient classical algorithm to check the persistence in the filtration for such sparse input. 
Our work raises the important question of if it is possible to show quantum advantage for the computation of topological summary or exclude the possibility of quantum advantage by showing an efficient classical algorithm.
}

\paragraph{Tracking a deformation of a hole across a filtration}
{Additionally, it is also important to find direct applications of \textsf{Harmonic Persistence} itself. 
A potential approach in this direction is to extract information about the shape evolution or deformation of a specific hole in the filtration. 
It would be interesting to extract information about the deformation of the hole in terms of essential contents~\cite{basu:harmonic} and the spatial location of the hole
\cite{salch2021mathematics}.
}

   \paragraph{Identification of initial holes}

{
In this paper, we have shown that the \textsf{Harmonic Persistence} problem is $\mathsf{BQP}_1$-hard when we are given a succinct description of the harmonic representative of an initial hole. 
However, it is unclear whether we have access to the description of the hole to check the persistence in practical situations. 
We illustrate two open directions about the identification of initial holes.

One direction is to consider instances where it is indeed easy to find an initial hole from just the description of the given simplicial complex. 
An example of such an instance would be a clique complex of a sparse graph. It should be easy to find holes when the number of simplices is small. 
However, our hardness reduction fails for such sparse instances. 
At the same time, we do not know an efficient classical algorithm for such instances as well. It is an open question whether one can show quantum advantage for the \textsf{Harmonic Persistence} problem starting from sparse simplicial complexes. 

The second approach would be to consider random instances for clique complexes with (exponentially) many {\it known} high-dimensional holes. 
For example, one can imagine simplicial complexes introduced in \cite{berry2024analyzing} using complete graphs or gadget clique complexes in \cite{crichigno2022clique,king:qma,hayakawa2025computational}. 
Then, there is a way of picking a random initial hole from the known holes according to certain distributions. 
Now, we can argue the persistence of the picked hole after randomly adding simplices according to certain distributions. 
This approach is complementary to the previous approach because now we are not trying to find initial holes, but trying to randomly choose one hole from the many known holes.
}

\subsection{Related works}

While our problems and results share some similarities with the guided Hamiltonian problem~\cite{cade:glh}, they are fundamentally different. 
Although our problem includes a succinct description of a quantum state as part of the input, this should not be mistaken for a guiding state in the local Hamiltonian problem. 
Specifically, the state we provide is not an initial guess for the ground state of a Hamiltonian but rather the state corresponding to a hole whose persistence we seek to evaluate. 
{Our problem involves estimating
the ground state overlap of this state with respect to a different Hamiltonian,}
rather than using it to determine whether that Hamiltonian has a small ground state energy.

\subsection{Future work}
Our work opens several interesting directions for future work. 
Firstly, an obvious direction is to establish $\mathsf{BQP}_1$-hardness and containment in $\mathsf{BQP}$ for the \textsf{Harmonic Persistence} problem in the context of \textit{unweighted} clique complexes. 
While the construction in~\cite{crichigno2022clique} utilizes unweighted simplicial complexes, it is unclear how this approach could yield unweighted complexes with a sufficiently large spectral gap of the combinatorial Laplacian -- an essential property for containment in $\mathsf{BQP}$.
Secondly, in Theorem~\ref{thm:underlying2}, we establish the $\mathsf{BQP}$-completeness of the \textsf{Low-Energy Overlap} problem. 
This problem generalizes \textsf{Harmonic Persistence}, in part by lifting the restriction that the input must be a combinatorial Laplacian. 
However, it remains an open question whether $\mathsf{BQP}$-completeness can be proved for the \textsf{Low-Energy Overlap} problem when restricted specifically to combinatorial Laplacians of clique complexes.
Finally, recent developments in persistent homology have introduced the concept of the \textit{persistent Laplacian}~\cite{wang2020persistent, memoli:persistent}. 
The dimension of the kernels of these persistent Laplacians is known to correspond to the persistent Betti numbers. 
This connection raises the following natural question: can we define a problem related to the persistent Laplacian that is $\mathsf{BQP}_1$-hard and contained in $\mathsf{BQP}$, akin to the \textsf{Harmonic Persistence} problem? 

Beyond these complexity-theoretic directions, an important avenue for future work is to develop concrete applications of \textsf{Harmonic Persistence} to real-world TDA pipelines. 
Key challenges include identifying practically relevant families of simplicial complexes where initial holes can be efficiently specified, and demonstrating a quantum speedup for the computation of topological summaries such as barcodes and persistence diagrams beyond the worst-case setting established in this work.

\section{Conclusion}
\label{sec:conclusion}

In this paper, we explored the computational complexity of a computational problem closely related to the core task in TDA of determining whether a given hole persists across different length scales.
We defined our computational problem as determining whether the projection of a harmonic representative of a hole from one simplicial complex $K_1$ to a larger simplicial complex $K_2$ is large or small.
Our main result establishes that the formalized version of this problem, which we refer to as $\delta\text{-}\mathsf{Harmonic}$ $\mathsf{Persistence}$, is $\mathsf{BQP}_1$-hard and contained in $\mathsf{BQP}$.
Additionally, we introduced a related computational problem, $\delta\text{-}\mathsf{Harmonics}$, which assesses whether a chain has significant overlap with the harmonic homology space, and showed that this problem is also $\mathsf{BQP}_1$-hard and contained in $\mathsf{BQP}$.
Finally, we generalized our approach by introducing $\textsf{Kernel overlap}$ and $\textsf{Low-energy overlap}$ as new underlying quantum primitives, which differ subtly from the guided sparse Hamiltonian problem while maintaining a similar computational complexity. 
Our findings highlight the connections between TDA and quantum computing, providing a foundation for further research in this area.

\begin{acknowledgments}
CG, VD and RH thank Nicolas Sale for early discussions on the topic. AS thanks Seth Lloyd for helpful discussions. AS was partially funded by NSF grant PHY-2325080. RK was supported by the IQIM, an NSF Physics Frontiers
Center (NSF Grant PHY-1125565). RH was supported by JSPS KAKENHI PRESTO Grant Number JPMJPR23F9 and MEXT KAKENHI Grant Number 21H05183, Japan.
This work was also partially supported by the Dutch Research Council (NWO/OCW), as part of the Quantum Software Consortium programme (project number 024.003.03), and co-funded by the European Union (ERC CoG, BeMAIQuantum, 101124342). 
Views and opinions expressed are however those of the author(s) only and do not necessarily reflect those of the European Union or the European Research Council. 
Neither the European Union nor the granting authority can be held responsible for them. 
This work was also supported by the Dutch National Growth Fund (NGF), as part of the Quantum Delta NL programme.
\end{acknowledgments}

\appendix

\section{Technical details on construction of Hamiltonian from~\cite{bravyi:sat}}
\label{appendix:qsat}



Consider a quantum circuit of the form $U = U_T\cdots U_1$. We summarize a construction of projectors on $4$-local Hamiltonian on $\mathfrak{H} = \left(\mathbb{C}^{2}\right)^{\otimes n}_{\mathrm{comp}} \otimes \left(\mathbb{C}^{4}\right)^{\otimes T}_{\mathrm{clock}}=\mathfrak{H}_{\mathrm{comp}}\otimes \mathfrak{H}_{\mathrm{clock}}$  that appears with the construction of \cite{bravyi:sat}. 
Each of the 4-dimensional clock qudit is spanned by 
$$
\ket{a1}, \ket{a2}, \ket{d}, \ket{u},
$$
where these labels represent {\it active phase 1, active phase 2, dead, unborn}. 
Define a ``legal clock states'' for $t=1,2,...,T$ as 
$$
\ket{C_t}:= |\underbrace{d,...,d}_{t-1},a1,\underbrace{u,...,u}_{T-t}\rangle, \ \ 
\ket{C'_t}:= |\underbrace{d,...,d}_{t-1},a2,\underbrace{u,...,u}_{T-t}\rangle.
$$
We also introduce the history state as
\begin{align*}
&\ket{\psi_{\mathrm{hist}}{(w)}}\\& =
\frac{1}{\sqrt{2T}} \left(\sum_{t=1}^T
\ket{C_t}\otimes \ket{\psi_{t-1}{(w)}} + \ket{C'_t}\otimes \ket{\psi_t{(w)}}\right), 
\end{align*}
where $\ket{\psi_{0}{(w)}}:= \ket{w}\otimes \ket{0...0}$ for a witness $\ket{w}$. 

The Hamiltonian constructed in \cite{bravyi:sat} is in the form of 

$$
H=H_{\mathrm{in}}+ H_{\mathrm{prop}} + H_{\mathrm{clock}} + H_{\mathrm{out}}
$$
Each of the terms are defined as follows:
first, 
$$
H_{\mathrm{in}}:= \ket{a1}\bra{a1}_1 \otimes \left( \sum_{b\in R_{\mathrm{in}}} \ket{1}\bra{1}_b \right), 
$$
where $R_{\mathrm{in}}$ represents the qubits in the computation register other than the witness register. 
Next, 
$$
H_{\mathrm{prop}}: = \sum_{t=1}^T H_{\mathrm{prop},t} + \sum_{t=1}^{T-1} H'_{\mathrm{prop},t},
$$
where 
\begin{align*}
H_{\mathrm{prop},t} =\frac{1}{2} &\big[ (\ket{a1}\bra{a1}+\ket{a2}\bra{a2})_t \otimes I_{\mathrm{comp}} \\& - 
\ket{a2}\bra{a1}_t \otimes U_t - \ket{a1}\bra{a2}_t \otimes U^\dagger_t \big] 
\end{align*}
and
\begin{align*}
    H'_{\mathrm{prop},t} =\frac{1}{2}& (\ket{a2,u}\bra{a2,u}+\ket{d,a1}\bra{d,a1} \\& -\ket{d,a1}\bra{a2,u}-\ket{a2,u}\bra{d,a1})_{t,t+1} \otimes I_{\mathrm{comp}}.
\end{align*}
Third, 
$$H_{\mathrm{clock}}:= \sum_{j=1}^6 H^{(j)}_{\mathrm{clock}},$$
where
\begin{align*}
    &H^{(1)}_{\mathrm{clock}} := \ket{u}\bra{u}_1,\\
    &H^{(2)}_{\mathrm{clock}} := \ket{d}\bra{d}_T\\
    &H^{(3)}_{\mathrm{clock}} \\ &  \sum_{1\leq i<k\leq T} (\ket{a1}\bra{a1}+\ket{a2}\bra{a2})_i \otimes (\ket{a1}\bra{a1}+\ket{a2}\bra{a2})_k \\
    &H^{(3)}_{\mathrm{clock}}:= \\ &  \sum_{1\leq i<k\leq T} (\ket{a1}\bra{a1}+\ket{a2}\bra{a2})_i \otimes (\ket{a1}\bra{a1}+\ket{a2}\bra{a2})_k \\
    &H^{(4)}_{\mathrm{clock}} := \sum_{1\leq i<k\leq T}  (\ket{a1}\bra{a1}+\ket{a2}\bra{a2} + \ket{u}\bra{u})_i\otimes \ket{d}\bra{d}_k\\
    &H^{(5)}_{\mathrm{clock}} := \sum_{1\leq i<k\leq T}  \ket{u}\bra{u}_j \otimes (\ket{a1}\bra{a1}+\ket{a2}\bra{a2} + \ket{d}\bra{d})_k \\
    &H^{(6)}_{\mathrm{clock}} :=\sum_{1\leq i\leq T-1} \ket{d}\bra{d}_i \otimes \ket{u}\bra{u}_{i+1}
\end{align*}

Finally, 
$$H_{\mathrm{out}}:= \ket{a2}\bra{a2}_T \otimes \ket{1}\bra{1}_{\mathrm{out}}. $$

Then, the following results are shown in \cite{bravyi:sat}.
$\ket{\psi_{\mathrm{hist}}}$ is in the kernel of $H(U)=H_{\mathrm{in}}+ H_{\mathrm{prop}} + H_{\mathrm{clock}}$ for any witness state. 
Moreover, if $U$ accepts a witness with probability $1$, $\ker H$ is non-empty (for YES instances). 
If $U$ accepts any witness at most with probability $\epsilon=1/\mathrm{poly}(n)$, the ground state energy is larger than $1/\mathrm{poly}(n)$ (for NO instances). 

Note that we can represent these states using two qubits instead of the 4-dimensional qudit. 
We assign $\ket{00}$ to $\ket{u}$, $\ket{01}$ to $\ket{a1}$, $\ket{11}$ to $\ket{d}$. In this clock qubit representation, proper clock states can be represented as 
\begin{align*}
 & \ket{C_t}:= |\underbrace{11,...,11}_{t-1},01,\underbrace{00,...,00}_{T-t}\rangle, \\  &
\ket{C'_t}:= |\underbrace{11,...,11}_{t-1},10,\underbrace{00,...,00}_{T-t}\rangle.  
\end{align*}


{
Finally, we prove the following lemma.
\begin{lemma}
\label{lemma:gap_hist}
Let $
H_{\mathrm{hist}}=H_{\mathrm{in}}+ H_{\mathrm{prop}} + H_{\mathrm{clock}}
$. Then, it holds that 
$$\gamma(H_{\mathrm{hist}})\geq \frac{1}{\mathrm{poly}(n)}$$
where $\gamma(H_{\mathrm{hist}})$ is the minimum non-zero eigenvalue of $H_{\mathrm{hist}}$.
\end{lemma}
}
\begin{proof}
{
We can only consider legal clock space because $\gamma(H_{\mathrm{hist}})$ keeps legal/illegal clock space invariant and illegal clock space is penalized with a constant energy by $H_{\mathrm{clock}}$. 
Let $\Pi$ be a projector onto legal clock space tensored with $\mathfrak{H}_{\mathrm{comp}}$. 
We will consider the spectrum of $\Pi H_{\mathrm{in}}\Pi+\Pi H_{\mathrm{prop},t} \Pi+\Pi H'_{\mathrm{prop},t}\Pi$ restricted to the legal clock space. 
Then, 
\begin{align*}
\Pi H_{\mathrm{prop},t} \Pi=\frac{1}{2} &\big[ (\ket{C_t}\bra{C_t}+\ket{C_t'}\bra{C_t'}) \otimes I_{\mathrm{comp}} \\& - 
\ket{C_t'}\bra{C_t} \otimes U_t - \ket{C
_t}\bra{C_t'} \otimes U^\dagger_t \big],
\end{align*}
\begin{align*}
    \Pi H'_{\mathrm{prop},t}\Pi =\frac{1}{2}& (\ket{C_t'}\bra{C_t'}+\ket{C_t}\bra{C_t} \\& -\ket{C_t}\bra{C_t'}-\ket{C_t'}\bra{C_T}) \otimes I_{\mathrm{comp}},
\end{align*}
and 
$$
\Pi H_{\mathrm{in}}\Pi:= \ket{C_1}\bra{C_1} \otimes \left( \sum_{b\in R_{\mathrm{in}}} \ket{1}\bra{1}_b \right). 
$$
Now, consider a modification of quantum circuit as 
$$U' = U_T\cdot I\cdots I\cdot U_2\cdot I\cdot U_1,$$
which can be obtained by inserting ``idling'' gates between every gates in the original circuit. 
Then, it can be seen that 
$$\sum_{t=1}^T (\Pi H_{\mathrm{prop},t} \Pi+\Pi H'_{\mathrm{prop},t} \Pi)$$ forms a standard propagation term as appeared in Eq.~\eqref{eq:kitaev_ham} applied for a modified circuit $U'$.
($\Pi H_{\mathrm{prop},t}\Pi$ can be seen as the propagation term for the original gates and $\Pi H'_{\mathrm{prop},t}\Pi$ can be seen as the propagation term for the inserted idling gates.)
}
{Therefore, we can apply the same argument with Lemma 23 in \cite{gharibian2012hardness} to conclude the lemma.}
\end{proof}

\section{Overlap lemma}
\label{appendix:overlap_lemma}

\begin{lemma}
\label{lemma:overlap}
Let $H$ be an $m$-local Hamiltonian on $n$ qubits of the form
\[
H = \sum_{i=1}^t \phi_i
\]
where each term $\phi_i$ is a rank-1 projector onto an integer state $\ket{\phi_i}$. 
Also, suppose that $H$ has a non-empty kernel, and let $g$ denote the spectral gap of $H$. 
Next, let $\widehat{\mathcal{G}}_n$ denote the weighted graph described in \cite[Section 10.3]{king:qma} obtained by adding the gadgets for every term $\phi_i$ to the base graph $\mathcal{G}_n$ from Eq.~\eqref{eq:base_graph} using vertex-weighting parameter 
\[\
\lambda = c t^{-1} g
\]
for some sufficiently small constant $c > 0$ Finally, let $K = \mathrm{cl}(\widehat{\mathcal{G}}_n)$ and $E  =  c \lambda^{4m+2} t^{-1} g$.

Then, the following holds: for any $\ket{\sigma} = s(\ket{\varphi}) \in \mathfrak{H}_{\mathrm{sim}}$ such that $\bra{\sigma}\Delta^K_{2n-1}\ket{\sigma} \leq E$ we have that $\ket{\varphi}$ has overlap at least $0.99$ with $\ker H$.
\end{lemma}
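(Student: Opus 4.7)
The plan is to translate the hypothesis $\bra{\sigma}\Delta^K_{2n-1}\ket{\sigma} \leq E$ on the combinatorial Laplacian into a corresponding bound on the QSAT Hamiltonian energy $\bra{\varphi}H\ket{\varphi}$, and then to apply the spectral gap of $H$ to deduce large overlap with $\ker H$. The translation should come from the effective Hamiltonian analysis of the gadget construction in \cite[Section 10]{king:qma}: because each gadget $\mathcal{T}_i$ is designed to ``fill'' precisely the $(2n-1)$-dimensional hole associated with $s(\ket{\phi_i})$, on the simulated qubit space $\mathfrak{H}_{\mathrm{sim}}$ the operator $\Delta^K_{2n-1}$ acts, to leading order in the vertex weight $\lambda$, as a rescaled copy of $sHs^{-1}$. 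Concretely, I would extract from the King--Kohler block analysis a lower bound of the form
\[
\bra{\sigma}\Delta^K_{2n-1}\ket{\sigma} \;\geq\; c_1\, \lambda^{4m+2}\, \bra{\varphi}H\ket{\varphi} \;-\; \text{(higher-order corrections in $\lambda$)}
\]
for an explicit constant $c_1 > 0$ and for every $\ket{\sigma} = s(\ket{\varphi}) \in \mathfrak{H}_{\mathrm{sim}}$, where the correction terms arise from leakage into the complementary chain subspaces $\bigoplus_i C_{2n-1}(\mathcal{T}_i)$ and are controlled by the choice $\lambda = ct^{-1}g$.

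Combining this inequality with $\bra{\sigma}\Delta^K_{2n-1}\ket{\sigma} \leq E = c\lambda^{4m+2}t^{-1}g$ yields $\bra{\varphi}H\ket{\varphi} \leq (c/c_1)\,t^{-1}g$, which for a sufficiently small universal choice of $c$ is strictly below $g/100$. I would then invoke positive semi-definiteness together with the spectral gap: writing $\ket{\varphi} = \alpha \ket{\varphi_0} + \beta \ket{\varphi_\perp}$ with $\ket{\varphi_0} \in \ker H$ and $\ket{\varphi_\perp} \in (\ker H)^{\perp}$, we have $\bra{\varphi}H\ket{\varphi} \geq |\beta|^2 g$, so $|\beta|^2 \leq 1/100$ and therefore $|\alpha|^2 \geq 0.99$. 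Since $s$ is an isometry between $\mathfrak{H}$ and $\mathfrak{H}_{\mathrm{sim}}$ by construction, overlap of $\ket{\varphi}$ with $\ker H$ coincides with overlap of $\ket{\sigma}$ with $s(\ker H)$, and the conclusion of the lemma follows.

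The main obstacle is the first step: carefully extracting the effective action of $\Delta^K_{2n-1}$ on $\mathfrak{H}_{\mathrm{sim}}$ and pinning down the correct power $\lambda^{4m+2}$ together with the implicit constants. Concretely, this requires revisiting the block decomposition of $\Delta^K_{2n-1}$ with respect to $C_p(K) = C_p(K_1) \oplus \bigoplus_i C_p(\mathcal{T}_i)$, verifying that the gadget blocks contribute, after Schur-complement-style reduction, a positive operator whose restriction to $\mathfrak{H}_{\mathrm{sim}}$ reproduces $\sum_i \lambda^{4m+2} s\phi_i s^{-1} = \lambda^{4m+2}\, sHs^{-1}$ at leading order, and bounding the subleading contributions so that they are dominated by the $t^{-1}g$ slack built into $E$. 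The choice $\lambda = ct^{-1}g$ is precisely calibrated so that these perturbative corrections are controlled relative to the dominant $\lambda^{4m+2}$-scaled term, and verifying this calibration quantitatively is the main technical content of the argument.
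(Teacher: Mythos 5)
Your overall strategy is the right one and matches the paper's endgame: convert the hypothesis $\bra{\sigma}\Delta^K_{2n-1}\ket{\sigma}\leq E$ into the bound $\bra{\varphi}H\ket{\varphi}\leq g/100$, then decompose $\ket{\varphi}$ against $\ker H$ and use $\bra{\varphi}H\ket{\varphi}\geq |\beta|^2 g$ to get overlap at least $0.99$. That final step is correct as you wrote it.

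The genuine gap is that the pivotal first step is only announced, not carried out. Your proposed inequality $\bra{\sigma}\Delta^K_{2n-1}\ket{\sigma}\geq c_1\lambda^{4m+2}\bra{\varphi}H\ket{\varphi} - (\text{corrections})$ is exactly the hard part of the lemma, and you explicitly defer it (``the main obstacle is the first step''). Without it the argument does not close: one must verify both that the effective operator on $\mathfrak{H}_{\mathrm{sim}}$ really is $\lambda^{4m+2}\,sHs^{-1}$ at leading order and, crucially, that the correction terms are small compared to $\lambda^{4m+2}t^{-1}g$ after the calibration $\lambda = ct^{-1}g$; neither is checked. It is also worth noting that the paper does not obtain this bound by a Schur-complement reduction. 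Instead it expands $1=\langle\sigma|\sigma\rangle$ using the King--Kohler projector identities $\Pi_0^k+\sum_i\Pi_i^k=\mathrm{id}$ and $\Pi_i^{(\mathcal{A})}+\Pi_i^{(\mathcal{B})}+\hat\Phi_i+\hat\Phi_i^\perp=\Pi_0^{2n-1}+\Pi_i^{2n-1}$, identifies $\langle\sigma|\sum_i\Phi_i|\sigma\rangle=\langle\varphi|H|\varphi\rangle$ on the simulated qubit space, and then invokes Lemmas 10.9--10.11 of King--Kohler to bound the leakage terms $\langle\sigma|\sum_i\hat\Phi_i|\sigma\rangle$, $\langle\sigma|\sum_i\Pi_i^{(\mathcal{A})}|\sigma\rangle$, $\langle\sigma|\sum_i\Pi_i^{(\mathcal{B})}|\sigma\rangle$ by $g/100$ --- and it is only in those imported lemmas that the energy hypothesis $\bra{\sigma}\Delta^K_{2n-1}\ket{\sigma}\leq E$ and the specific powers of $\lambda$ actually enter. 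To complete your proof you would either need to reproduce that bookkeeping, or genuinely execute the effective-Hamiltonian analysis you sketch, including the quantitative control of the subleading terms.
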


\begin{proof}
\begin{widetext}
We import the notation described in \cite[Definition 10.8]{king:qma}. 
Using this notation, we recall that
\begin{align} 
\Pi_0^k + \sum_{i=1}^t \Pi_i^k &= \text{id} \label{complete_projectors_eq} \\
\Pi^{(\mathcal{A})}_i + \Pi^{(\mathcal{B})}_i + \hat{\Phi}_i + \hat{\Phi}^\perp_i &= \Pi^{2n-1}_0 + \Pi^{2n-1}_i \label{sing_gadg_eigenspaces_eq}
\end{align}
Using the above equations, we derive the following equalities.
\begin{align}
1 &= \langle\sigma|\sigma\rangle \\
&= \langle\sigma|\big(\Pi_0^{2n-1} + \sum_{i=1 }^t\Pi_i^{2n-1}\big)|\sigma\rangle \\
&= \sum_{i=1 }^t \langle\sigma|\big(\Pi_0^{2n-1} + \Pi_i^{2n-1}\big)|\sigma\rangle - (t-1) \langle\sigma|\Pi_0^{2n-1}|\sigma\rangle \\
&= \sum_{i=1 }^t \langle\sigma|\big(\hat{\Phi}^\perp_i + \hat{\Phi}_i + \Pi^{(\mathcal{A})}_i + \Pi^{(\mathcal{B})}_i\big)|\sigma\rangle - (t-1) \langle\sigma|\Pi_0^{2n-1}|\sigma\rangle \\
&= \Bigg( \langle\sigma| \sum_{i=1 }^t \hat{\Phi}^\perp_i|\sigma\rangle - (t-1) \langle\sigma|\Pi_0^{2n-1}|\sigma\rangle \Bigg) + \langle\sigma| \sum_{i=1 }^t \hat{\Phi}_i|\sigma\rangle + \langle\sigma| \sum_{i=1 }^t \Pi^{(\mathcal{A})}_i|\sigma\rangle + \langle\sigma| \sum_{i=1 }^t \Pi^{(\mathcal{B})}_i|\sigma\rangle
\end{align}
Examining the first term in the above equation, we notice that
\begin{align}
&\langle\sigma| \sum_{i=1 }^t \hat{\Phi}^\perp_i |\sigma\rangle - (t-1) \langle\sigma|\Pi_0^{2n-1}|\sigma\rangle \\
&= \langle\sigma| \sum_{i=1}^t \big(\Phi^\perp_i + \mathcal{O}(\lambda)\big) |\sigma\rangle - (t-1) \langle\sigma|\Pi_0^{2n-1}|\sigma\rangle \\
&= \langle\sigma| \sum_{i=1}^t \Phi^\perp_i |\sigma\rangle - (t-1) \langle\sigma|\Pi_0^{2n-1}|\sigma\rangle + \mathcal{O}(\lambda t) \\
&= \langle\sigma| \sum_{i=1}^t \big(\Pi^{(\mathcal{H})}_n - \Phi_i\big) |\sigma\rangle - (t-1) \langle\sigma|\Pi_0^{2n-1}|\sigma\rangle + \mathcal{O}(\lambda t) \\
&= \langle\sigma|\Pi^{(\mathcal{H})}_n|\sigma\rangle - \langle\sigma| \sum_{i=1}^t \Phi_i |\sigma\rangle - (t-1) \langle\sigma|\big(\Pi_0^{2n-1} - \Pi^{(\mathcal{H})}_n\big)|\sigma\rangle + \mathcal{O}(\lambda t)
\end{align}
where we used that $\hat{\Phi}^\perp_i = \Phi^\perp_i + \mathcal{O}(\lambda)$, which follows from~\cite[Lemma 50]{king:qma} and~\cite[Lemma 33, Part 1]{king:qma}

Next, we note that $\sum_{i=1}^t \Phi_i$ is precisely the implementation of the Hamiltonian $H$ on the simulated $n$-qubit subspace $\mathcal{H}_n$, so the term $\langle\varphi| \sum_{i=1}^t \Phi_i |\varphi\rangle$ is equal to
\begin{equation}
\langle\sigma| \sum_{i=1}^t \Phi_i |\sigma\rangle = \langle\varphi|H|\varphi\rangle
\end{equation}
Putting this all together, we have
\begin{align}
1 &= \langle\sigma|\Pi^{(\mathcal{H})}_n|\sigma\rangle - \langle\varphi|H|\varphi\rangle - (t-1) \langle\sigma|\big(\Pi_0^{2n-1} - \Pi^{(\mathcal{H})}_n\big)|\sigma\rangle + \mathcal{O}(\lambda t) \\
& \qquad + \langle\sigma| \sum_{i=1}^t \hat{\Phi}_i|\sigma\rangle + \langle\sigma| \sum_{i=1}^t \Pi^{(\mathcal{A})}_i|\sigma\rangle + \langle\sigma| \sum_{i=1}^t \Pi^{(\mathcal{B})}_i|\sigma\rangle
\end{align}
Finally, we use~\cite[Lemmas 10.9, 10.10, 10.11]{king:qma} to bound the final three terms. 
Doing so, we get
\begin{equation}
1 = \langle\sigma|\Pi^{(\mathcal{H})}_n|\sigma\rangle - \langle\varphi|H|\varphi\rangle - (t-1) \langle\sigma|\big(\Pi_0^{2n-1} - \Pi^{(\mathcal{H})}_n\big)|\sigma\rangle + g/100
\end{equation}
\begin{equation}
\implies \ \langle\sigma|\Big(\mathbbm{1} - \Pi^{(\mathcal{H})}_n\Big)|\sigma\rangle 
+ \langle\varphi|H|\varphi\rangle \leq g/100
\end{equation}
\begin{align}
\implies \ \langle\sigma|\Big(\mathbbm{1} - \Pi^{(\mathcal{H})}_n\Big)|\sigma\rangle &\leq g/100 \\
\langle\varphi|H|\varphi\rangle &\leq g/100
\end{align}
\end{widetext}
for sufficiently small constant $c > 0$. 
However, the Hamiltonian $H$ has a spectral gap $g$ by assumption, which shows that $\ket{\phi}$ has overlap at least $0.99$ with $\ker H$. 
\end{proof}

\bibliography{main}

\end{document}